\theoremstyle{plain}
\newtheorem{theorem}{Theorem}[section]
\theoremstyle{remark}
\newtheorem{proposition}{Proposition}[section]
\newcommand{\bu}{{\mathbf{u}}}
\newcommand{\by}{{\mathbf{y}}}
\newcommand{\cv}{\mbox{CV}}
\newcommand{\e}{{\mathbf{e}}}
\begin{document}

\begin{frontmatter}
\title{On cross-validated estimation of skew normal model}
\runtitle{Cross-validated estimation}

\begin{aug}
\author[A]{\inits{F.}\fnms{JIAN}~\snm{ZHANG}\ead[label=e1]{J.Zhang-79@kent.ac.uk}\orcid{0000-0001-8405-2323}}
\author[B]{\inits{S.}\fnms{TONG}~\snm{WANG}\ead[label=e2]{ianwang1009@gmail.com}}
\address[A]{School of Mathematics, Statistics and Actuarial Science,
University of Kent, Canterbury, Kent CT2 7NF, United Kingdom\printead[presep={,\ }]{e1}}

\address[B]{Novatis Pharmaceuticals UK, The WestWorks, London W12 7FQ, United Kingdom \printead[presep={,\ }]{e2}}
\end{aug}

\begin{abstract}
Skew normal model suffers from inferential drawbacks, namely singular Fisher information in the vicinity of symmetry and diverging of maximum likelihood estimation. To address the above drawbacks, Azzalini and Arellano-Valle (2013) introduced maximum penalised likelihood estimation (MPLE) by subtracting a penalty function from the log-likelihood function with a pre-specified penalty coefficient. Here, we propose a cross-validated MPLE to improve its performance when the underlying model is close to symmetry. We develop a theory for MPLE, where an asymptotic rate for the cross-validated penalty coefficient is derived. We further show that the proposed cross-validated MPLE is asymptotically efficient under certain conditions.  In simulation studies and a real data application, we demonstrate that the proposed estimator can outperform the conventional MPLE when the model is close to symmetry.
\end{abstract}

\begin{keyword}
\kwd{multifold cross-validation}
\kwd{skew normal distribution}
\kwd{maximum penalised likelihood estimator}
\kwd{asymptotics}
\end{keyword}

\end{frontmatter}

\section{Introduction}
Skewness, which measures the asymmetry of a distribution, is an important data feature to characterise. Change of data skewness can serve as a basis for detecting an attack upon a sensor network, for providing an early warning for abrupt climate changes, for estimating  aggregates of small domain business, for modelling equity excess returns, for characterising sensitivity of anti-cancer drugs, among others \citep{r6, r12, r8, r10, r9}. For example, in cancer research, people are interested in charaterising drug sensitivity and development of novel therapeutics.
 The data considered in this study consist of the measurements of median inhibition concentrations, IC50s, of $227$ drugs in 111 cancer cell lines  \citep{r14}. IC50 is a measure of how much drug is needed to inhibit the multiplication of that cell line by 50\%. The log-IC50 informs the drug sensitivity against  cancer cells. The location, dispersion and skewness parameters of the log-IC50 can be used to search for a combined drug therapy. For example, histogram plots for log-IC50s of drugs Erlotinib and Paclitaxel in Figure \ref{ErPa} demonstrate that these two drugs have contrasting data features, one has positive drug response and the other has negative response (i.e., drug resistance). Erlotinib is an inhibitor of the epidermal growth factor receptor (EGFR) tyrosine kinase pathway while Paclitaxel is a chemotherapy drug. Cancer stem cells are often enriched after chemotherapy and
induce tumor recurrence, which poses a significant clinical
challenge. Combining Erlotinib with Paclitaxel can overcome paclitaxel-resistant cervical cancer \citep{r16}.

By introducing a shape parameter $\alpha$ in a normal distribution, the skew-normal and more generally, skew symmetry distributions, can provide a better fit for asymmetric data than does the normal (\citep{r1}). Because of their appealing mathematical properties and usefulness in practice, skew-normal distributions have received considerable attention in the past two decades. Extensions have been made in various directions, including multivariate skew-normal, skew $t$- and skew elliptical distributions and finite mixtures of skew normals \citep{r2, r3, r15, r4, r17} and references therein.
 Some of these developments, however, suffer from inferential drawbacks, namely singular Fisher information in the vicinity of symmetry and diverging of maximum likelihood estimator of $\alpha$ \citep{r1, r13}.  To eliminate the singularity, a tentative remedy called centred parametrisation (CP) was put forward by \citep{r1}, where  $\alpha$ is reparametrised to Pearson's skewness index. Unfortunately, the likelihood function under the CP is not explicitly available as the Jacobian factor of the transformation  
 is unbounded when the underlying model is symmetric.  
 \cite{r7} showed that even using reparametrisation, the resulting maximum likelihood estimator (MLE) of $\alpha$ has a rate of $n^{-1/6}$ lower than the usual rate of root-$n$ at $\alpha=0$. The above remedy never really caught upon, partly because the mechanism of skewness is unknown in practice and the resulting skew-normal family, under the new parametrisation, loses much of its simplicity \citep{r4}. To develop an alternative remedy to handle the above inferential drawbacks, \cite{r5} considered a penalised log-likelihood by subtracting $\lambda Q(\alpha)$ from the log-likelihood, where penalty $Q(\alpha)$ is used to control the magnitude of $\alpha$. By using Firth's bias-correction technique \citep{r11}, \cite{r4} chose a fixed penalty coefficient $\lambda\approx 0.87591$ for $Q(\alpha)=\log(1+0.85625\alpha^2)$. However, this choice is against our intuition that when the underlying value of $\alpha$ is close to zero, the penalty coefficient should increase to infinity to force the estimating value quickly to zero.  In this paper, we aim to develop a data-driven procedure for improving the choice of the penalty coefficient with a theoretical guarantee.

 In literature, there are two approaches for determining the penalty coefficient, one is information criterion and the other is cross-validation. However, the former, usually working for non-degenerate models, may be invalid for the singular models in which the penalised likelihood (or posterior) cannot be approximated by any normal distribution. Although in a singular model, the generalisation error of an inference procedure may not be estimated well by information criteria, it can be estimated by the cross-validation \citep{r18}. This motivates us to investigate a multifold cross-validation procedure for skew normal estimation. Our contributions to the research field are two-fold.  Firstly, we develop a hyperbolic parametrisation to understand the nature of $\alpha$ from a point view of active function. Under the new parametrisation, we show that  the cross-validated estimator asymptotically attains the Cramer-Rao lower bound to estimating error. By simulation studies and a real data application, we demonstrate that the proposed estimation can outperform the bias-correction approach used in the software ${\bf SN}$ in terms of bias and standard error. Secondly, we unveil a super efficiency for the cross-validated estimation at $\alpha=0$, namely after an appropriate tuning, the cross-validated estimator can recover the true $\alpha$ exactly with an probability tending to one. This is in striking contrast to the MLE which has a very slow convergence rate at $\alpha=0.$ By theoretical analysis and simulations, we show that Firth's bias-correction technique may under-regularise the estimation at $\alpha=0.$ Furthermore, we demonstrate a screening effect of penalisation by simulations that any small skewness will be filtered out by MPLE.

{\centerline {[Put Figure \ref{ErPa} here.]}}
\begin{center}
\begin{figure}[htp]
\includegraphics[height=2.0in,width=0.45\textwidth]{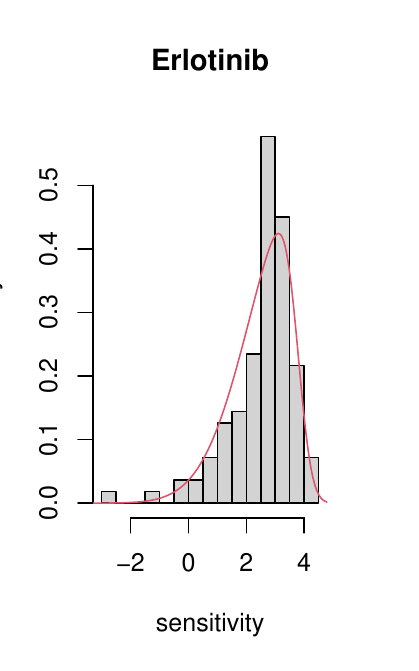}\hfill
\includegraphics[height=2.0in,width=0.45\textwidth]{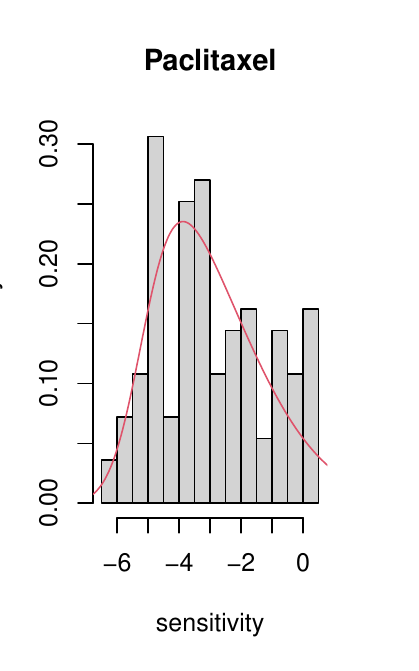}
\caption{\small
{\noindent Histogram plots and skew normal fits of log-IC50 data for drugs Erlotinib and Paclitaxel.
}
}
\label{ErPa}
\end{figure}
\end{center} 

The rest of the article is organised as follows. In Section 2 we develop the cross-validation procedure for skew normal models. We establish theoretical properties
of the proposed procedure in Section 3. We develop a penalised Expectation-Maximisation (EM) algorithm in Section 4. We conduct simulation studies and a real data analysis
in Section 5. We conclude with a discussion in Section 6.  The proofs are relegated to the
Appendix and some simulation results are displayed in the Online Supplementary Material.

\section{Methodology}
In this section we first review the location-scale model for skew normals and reparametrisation. Then after a short discussion of its inherent inferential issues we develop
a multifold cross-validation procedure for the penalised likelihood inference.
\subsection{Location-scale model}
Let $\textrm{SN}(0,1,\alpha)$ denote the standard skew normal distribution with density $f(x;\alpha)=2\phi(x)\Phi(\alpha x),$ $ x\in\mathbb{R},$
where $\phi(x)$ and $\Phi(x)$ are the standard normal density and cumulative distribution
function respectively, and $\alpha$ is the shape parameter to regulate the skewness. The skew normal distribution family includes the standard normal as a special when $\alpha=0$. The skew normal random variable $X$ can be expressed as a linear combination
of two independent variables, a half standard normal $X_{+}$ with density $2\phi(x), x\ge 0$ and the standard normal $X_{0}$,
in the form $X=\delta X_{+}+\sqrt{1-\delta^{2}}X_{0},$ where $\delta=\alpha/\sqrt{1+\alpha^{2}}.$ 
 Consider the location-scale model $Y=\mu+\sigma(X-\delta\sqrt{2/\pi})$ with location parameter $\mu\in {\mathbb R},$ scale parameter $\sigma\ge 0$ and density $f(y;\mu,\sigma,\alpha)=\sigma^{-1}f((y-\mu)/\sigma+\delta\sqrt{2/\pi};\alpha).$
We have $E[Y]=\mu$, $\textrm{var}(Y)=\sigma^{2}\textrm{var}(X)=\sigma^{2}(1-2\delta^{2}/\pi)$, $E[X]=\delta\sqrt{2/\pi}$, and $\textrm{var}(X)=1-2\delta^{2}/\pi.$
  Under the above location-scale model, Pearson's skewness index $\gamma_1$ is related to parameters $\alpha$ and $\delta$ via
\begin{eqnarray}\label{CP}
\gamma_{1}&=&\frac{E\left[(Y-\mu)^{3}\right]}{\textrm{var}(Y)^{3/2}}=\frac{4-\pi}{2}\frac{\delta^{3}(2/\pi)^{3/2}}{(1-2\delta^{2}/\pi)^{3/2}},\quad \left|\gamma_{1}\right|\leq 0.9952. \nonumber\\
\delta&=&\sqrt{\frac{\pi}{2}}\frac{(2\gamma_{1}/(4-\pi))^{1/3}}{\sqrt{1+(2\gamma_{1}/(4-\pi))^{2/3}}},\quad \alpha=\delta/\sqrt{1-\delta^2}.
\end{eqnarray}
Here, $\alpha$ and $\delta$ can be viewed as activation functions of $\gamma_1$ with derivatives
\begin{eqnarray*}
\frac{d\delta}{d\gamma_1}&=&\sqrt{\frac{\pi}2}\frac{2}{3(4-\pi)}\left(\frac{2\gamma_1}{4-\pi}\right)^{-2/3}\left(1+\left( \frac{2\gamma_1}{4-\pi}\right)^{2/3} \right)^{-3/2},\\
\frac{d\alpha}{d\gamma_1}&=&\sqrt{\frac{\pi}2}\frac{2}{3(4-\pi)}\left(\frac{2\gamma_1}{4-\pi}\right)^{-2/3}\left(1+(1-\pi/2)\left( \frac{2\gamma_1}{4-\pi}\right)^{2/3} \right)^{-3/2}
\end{eqnarray*}
which are unbounded derivatives at $\gamma_1=0.$ Figure \ref{activation} demonstrates that the skewness is introduced into the model via the activation function $\delta$ of the input $\gamma_1$, where $\gamma_1$ is restricted to the interval $(-0.9952,0.9952)$. By the activation function, the negative (positive) $\gamma_1$ will be mapped onto strongly negative (positive) $\delta$ while zero $\gamma_1$ will be mapped onto zero $\delta$.
In the CP, Azzalini (1985) used $\gamma_1$ to reparametrise $\alpha$ and $\delta$ through equation (\ref{CP}). 

\subsection{Hyperbolic reparamaterisation}

To tackle the above issue of unboundedness, we reparametrise $\alpha$ by the scaled inverse hyperbolic transformation
$
\theta=\mbox{arcsinh}(\alpha)/a, 
$
where $a>0$ is a pre-selected constant.
This gives rise to a family of activation functions
\[
\alpha=\mbox{sinh}(a\theta)=\frac{1}{2}(e^{a\theta}-e^{-a\theta}),\qquad \delta=\mbox{tanh}(a\theta)=\frac{e^{2a\theta}-1}{e^{2a\theta}+1},\qquad \alpha=\frac{\delta}{\sqrt{1-\delta^2}}
\]
with derivatives
\[
\frac{d\alpha}{d\theta}=a\mbox{cosh}(a\theta)=\frac a2(e^{a\theta}+e^{-a\theta}),\qquad
\frac{d\delta}{d\theta}=a(1-\delta^2).
\]
When $a=1$,  $\theta$ reduces to the Fisher transformation of $\delta.$
Figure 1 in the Online Supplementary Material shows that as $a$ tends to infinity $\mbox{sinh}(a\theta)$ is close to the CP. However, for finite fixed $a$'s, these activation functions give bounded derivatives. In the following, we focus on the simple case $a=1$.
To remove the constraint $\sigma\ge 0$, we reparametrise $\sigma$ by $\eta=\log(\sigma)$, which has a range of $(-\infty,\infty)$ and $\frac{d\sigma}{d\eta}=\sigma$.

\subsection{Initial estimation}
 Given an i.i.d. sample $\mathbf{y}=(y_{i})_{1\le i\le n}$ drawn from the above location-scale model, we use the method of moments to construct initial estimators as follows. 
Setting the first three moments of $Y$ equal to their corresponding sample moments and using the relationships between $\gamma_1$, $\delta$, $\alpha$ and $\theta$, we have the  initial estimates $\mu^{(0)}, \sigma^{(0)}, \eta^{(0)},$ $\theta^{(0)},\alpha^{(0)}$ and $\delta^{(0)}.$
Let $\mu_0, \sigma_0,\eta_0, \theta_0,\alpha_0$ and $\delta_0$ be the ground-truth of the parameters in the model.
It follows from the central limit theory that $\mu^{(0)}=\mu_0+O_p(1/\sqrt{n})$ and
$\gamma_{1}^{(0)}=\gamma_0+O_p(1/\sqrt{n}).$ Using equation (\ref{CP}), for $\theta_0=0$,
we have
$ \delta^{(0)}=O_p(n^{-1/6}),$ $\theta^{(0)}=O_p(n^{-1/6}),$ 
$\alpha^{(0)}=O_p(n^{-1/6}),$ and $ \eta^{(0)}=\eta_0+O_p(n^{-1/3})$.
In contrast, for $\theta_0\not=0,$ we have the following standard root-$\sqrt{n}$ convergence rates,
$
 \delta^{(0)}=\delta_0+O_p(1/\sqrt{n}),$ $ \theta^{(0)}=\theta_0+O_p(1/\sqrt{n}),$
 $\alpha^{(0)}=\alpha_0+O_p(1/\sqrt{n}),$ and
$\eta^{(0)}=\eta_0+O_p(1/\sqrt{n}).$

\subsection{Maximum penalised likelihood estimation}

 To develop an asymptotic theory for maximum penalised likelihood estimation, using the above initial estimates, we can restrict the domain of unknown parameters to
\[
\Omega_n=\{(\mu,\eta,\theta): \mid \mu-\mu^{(0)}\mid \le c_{\mu_0}n^{-1/2}, \mid \eta-\eta^{(0)}\mid \le
c_{\eta_0}n^{-1/3}, \mid \theta-\theta^{(0)}\mid \le c_{\theta_0}n^{-1/6}\}
\]
 for some positive constants $c_{\mu_0}$, $c_{\eta_0}$ and $c_{\theta_0}.$ Note that $\Omega_n$ is shrinking to the true values of the parameters as $n$ tends to infinity.
For the simplicity of notation, let $z_i=(y_i-\mu)/\sigma.$
Given the sample ${\mathbf y},$ we have the following log-likelihood
\begin{eqnarray*}
l_{inc}=l_{inc}\left(\mu,\eta,\theta\mid\mathbf{y}\right) 
 &=&  \frac{n}{2}\log\left(\frac{2}{\pi\sigma^{2}}\right)-\frac{1}{2}\sum_{i=1}^{n}\left(z_i+\delta\sqrt{{2}/{\pi}}\right)^{2}
  +\sum_{i=1}^{n}\log\Phi\left(\alpha\cdot\left(z_i+\delta\sqrt{{2}/{\pi}}\right)\right),
\end{eqnarray*}
and define the maximum likelihood estimate (MLE)
$$
(\hat{\mu},\hat{\eta},\hat{\theta})  = \underset{\left(\mu,\eta,\theta\right)\in \Omega_n}{\arg\max }\ l_{inc}\left(\mu,\eta,\theta\mid\mathbf{y}\right).
$$
The above likelihood is singular at $\theta=0$ with a stationary point at $\theta=0$ regardless values of the other parameters as the Fisher information matrix at
$(\mu,\eta,0)$,
\[
-E\left(\frac{\partial^2\log l_{inc}}{\partial (\mu,\eta,\theta)\partial (\mu,\eta,\theta)^T}\right) |_{(\mu,\eta,0)}=\mbox{diag}\left(\frac{n}{\sigma^{2}},2n,0\right)
\]
is degenerate. This results in a slow convergence rate of $\hat{\theta}$ and non-standard asymptotic behavior of the MLE when the underlying vaule of $\theta$ is zero or near zero.
As noted previously, the MLE of the shape parameter of the skew normal diverges with a probability that is non-negligible for small and moderate sample sizes.
 To address these issues, following \cite {r5}, we maximise the penalised log-likelihood
$$l_{incp}(\mu,\eta,\theta\mid{\mathbf y})=l_{inc}(\mu,\eta,\theta\mid{\mathbf y})-\lambda \mbox{ pen}(\theta),$$
where a penalty is used to control the size of $\theta$, satisfying the condition
\[
\mbox{\it C1 }: \mbox{ pen}(\theta)\ge 0,\quad \mbox{ pen}(0)=\mbox{ pen}'(0)=0,\quad \mbox{ pen}''(0)=2,\quad \lim_{|\theta|\to \infty}\mbox{ pen}(\theta)\to\infty.
\]
For example, hyperbolic penalty $\mbox{pen}_1(\theta)=\alpha^2=(e^{\theta}-e^{-\theta})^2/4$, ridge penalty $\mbox{pen}_2(\theta)=\theta^2$ and log-Cauchy $\mbox{pen}_3(\theta)=\log(1+c_2\alpha^2)=\log(1+c_2(e^{\theta}-e^{-\theta})^2/4)$ meet these conditions, where the log-Cauchy was proposed by \cite{r4}. All these penalties encourage shrinkage of the skewness parameter toward zero while preventing it from diverging to infinity. 
As the Fisher information matrix of the penalised likelihood is not singular,
the following inference can be conducted.
For each $0\le \lambda/n\le \omega_0$, define the maximum penalised likelihood estimator 
(MPLE) 
$$
(\hat{\mu}_{\lambda},\hat{\eta}_{\lambda},\hat{\theta}_{\lambda})
=\mbox{arg}\max_{ (\mu,\eta,\theta)\in\Omega_n} l_{incp}(\mu,\eta,\theta\mid{\mathbf y}).
$$
 The larger the penalty coefficient, the greater the accuracy 
of estimating $\theta$ when the true value of $\theta$ is zero while the larger estimating bias when the true value of $\theta$ is not zero. Multifold cross-validation below strikes a balance between the accuracy and the bias by tuning the penalty coefficient and therefore achieves a better prediction for new samples.

\subsection{Multifold cross validation}
Given an i.i.d. sample ${\mathbf y}$ of size $n$ drawn from a skew normal distribution, for a pre-specified positive constant $\omega_0$, define the expected out-of-sample generalisation error, 
\[
\cv(\lambda)=-E[l_{inc}(\hat{\mu}_{\lambda},\hat{\eta}_{\lambda},\hat{\theta}_{\lambda}|{\mathbf y}^*)],
\]
 if we were to apply the model based on estimator $(\hat{\mu}_{\lambda},\hat{\eta}_{\lambda},\hat{\theta}_{\lambda})$ to predict a new set of observations ${\mathbf y}^*$ drawn independently from the same distribution as that of ${\mathbf y}$.
 The above expectation is taken with respect to both ${\mathbf y}$ and ${\mathbf y}^*$. The generalisation error $\cv(\lambda)$ can be used as a criterion to compare candidate estimators $(\hat{\mu}_{\lambda},\hat{\eta}_{\lambda},\hat{\theta}_{\lambda})$, $0\le\lambda/n\le \omega_0$. We estimate the expected out-of-sample generalisation error $\cv(\lambda)$ by multifold cross-validation as follows. 

For a pre-specified integer $K>0$, divide the data $\mathbf{y}$ into $K$ groups ${\mathbf y}_{j}, 1\le j\le K$ with corresponding index groups $[j], 1\le j\le K$.
For each $\lambda$ and $1\le j\le K$, we calculate estimates $\left(\hat{\mu}_{[-j]\lambda},\hat{\eta}_{[-j]\lambda},\hat{\theta}_{[-j]\lambda}\right)$, based on the training set ${\bf y}_{[-j]}$,  by maximising log-likelihood $l_{incp}\left(\mu,\eta,\theta\mid {\mathbf y}_{[-j]}\right)$. For each $j,$ taking
${\mathbf y}_{j}$ as the validation sample, we can estimate the out-of-sample generalisation error by
$-l_{inc}\left(\hat{\mu}_{[-j]\lambda},\hat{\eta}_{[-j]\lambda},\hat{\theta}_{[-j]\lambda}\mid {\mathbf y}_{j}\right).$ Averaging these estimated errors, we have the following avarage generalisation error
\[
\cv_a(\lambda)=-\frac 1{K}\sum_{j=1}^{K}l_{inc}\left(\hat{\mu}_{[-j]\lambda},\hat{\eta}_{[-j]\lambda},\hat{\theta}_{[-j]\lambda}\mid {\mathbf y}_{j}\right),
\]
 where $K$ is a positive integer. 
The optimal tuning $\lambda_{op}=\mbox{arg}\min_{0\le\lambda/n\le\omega_0} E[\cv_a(\lambda) ]$ is estimated by 
$$\lambda_{cv}=\mbox{arg}\min_{0\le\lambda/n\le\omega_0}\cv_a(\lambda).$$

\section{Asymptotic theory}

Note that for each $\lambda$, the penalised MLE is obtained by solving simultaneous equations
\begin{eqnarray}\label{mleq}
&&\frac 1n\frac{\partial l_{inc}(\hat{\mu},\hat{\eta},\hat{\theta})\mid \mathbf{y})}{\partial\mu}=0, \quad
\frac 1n\frac{\partial l_{inc}(\hat{\mu},\hat{\eta},\hat{\theta})\mid \mathbf{y})}{\partial\eta}=0,\nonumber \\
&&\frac 1n\frac{\partial l_{inc}(\hat{\mu},\hat{\eta},\hat{\theta})\mid \mathbf{y})}{\partial\theta}-\frac{\lambda}{2n}(e^{2\hat{\theta}}-e^{-2\hat{\theta}
})=0,
\end{eqnarray}
where 
\begin{eqnarray*}
\frac 1n\frac{\partial l_{inc}(\mu,\eta,\theta)\mid \mathbf{y})}{\partial\mu}&=&\frac 1n\sum_{i=1}
\left(z_i+\delta\sqrt{2/{\pi}} \right)\frac 1{\sigma}
-\frac 1n\sum_{i=1}^n\frac{\phi(A_i)}{\Phi(A_i)}\frac{\alpha}{\sigma},\\
\frac 1n\frac{\partial l_{inc}(\mu,\eta,\theta)\mid \mathbf{y})}{\partial\eta}
&=&-1+\frac 1n\sum_{i=1}
\left(z_i+\delta\sqrt{2/{\pi}} \right)z_i
-\frac {\alpha}n\sum_{i=1}^n\frac{\phi(A_i)}{\Phi(A_i)}z_i,\\
\frac 1n\frac{\partial l_{inc}(\mu,\eta,\theta)\mid \mathbf{y})}{\partial\theta}
&=&-\frac 1n\sum_{i=1}
\left(z_i+\delta\sqrt{2/{\pi}} \right)(1-\delta^2)\sqrt{2/{\pi}}\\
&&+\frac 1n\sum_{i=1}^n\frac{\phi(A_i)}{\Phi(A_i)}
\left( \frac{(e^{\theta}+e^{-\theta})}2\left( z_i
+\delta\sqrt{2/{\pi}}\right) 
+\alpha(1-\delta^2)\sqrt{ 2/{\pi}}\right)
\end{eqnarray*}
with $A_i=\alpha(z_i+\delta\sqrt{ 2/{\pi}})$.
To develop the thoery, denote by $C=(c_{ij})_{3\times 3}=C_{(\mu,\eta,\theta)}=(c_{ij}(\mu,\eta,\theta))_{3\times 3}$ the second derivative matrix of the log-likelihood with respect to $(\mu,\eta,\theta)$. 
Applying the Taylor expansion to the functions in (\ref{mleq}) at the ground-truth
$(\mu_0,\eta_0,\theta_0)$, we have
\begin{eqnarray}\label{case1eq}
0&=&\frac 1{\sqrt{n}}\frac{\partial l_{inc}(y_i)}{\partial (\mu_0,\eta_0,\theta_0)^T}
-\frac {\lambda}{2\sqrt{n}}(e^{2\theta_0}-e^{-2\theta_0})\e_3
\nonumber\\
&&+(C-D)_{(\mu^*,\eta^*,\theta^*)}
\sqrt{n}(\hat{\mu}-\mu_0,\hat{\eta}-\eta_0,\hat{\theta}-\theta_0)^T,
\end{eqnarray}
where $ \e_3=(0,0,1)^T$ and $(\mu^*,\eta^*,\theta^*)=(\mu_0,\eta_0, \theta_0)+t(\hat{\mu}-\mu_0,\hat{\eta}-\eta_0,\hat{\theta}-\theta_0)$, $0\le t\le 1.$ 
Denote $D_{\theta\lambda/n}=\mbox{diag}\left(0,0,\frac{\lambda}{n}(e^{2\theta}+e^{-2\theta})\right)$ and $(C-D)_{(\mu,\eta,\theta)}=C_{(\mu,\eta,\theta)}-D_{\theta\lambda/n}.$
For the notation simplicity, let $D_{0\lambda/n}$ denote $D_{\theta_0\lambda/n}$, $C_0$ denote $\lim_{n\to\infty} C_{(\mu_0,\eta_0, \theta_0)}$ and $(C-D)_{0\lambda/n}$ denote $C_0-D_{0\lambda/n}$. Then $C_{(\mu_0,\eta_0, \theta_0)}=C_0 +O_p(1/\sqrt{n}),$ where $-C_0$ is the Fisher information matrix at $(\mu_0,\eta_0, \theta_0) $.
 For a pre-specified positive constant $\omega_0$, consider $0\le \lambda/n\le \omega_0$.  
Taking $\{c_{ij}=c_{ij}(\mu,\eta,\theta): (\mu,\eta,\theta)\in \Omega_1\}$ as empirical processes to which we apply the weak large law,  we have, uniformaly for $0\le \lambda/n\le\omega_0$ and bounded $(\mu_0,\eta_0,\theta_0)$,
\begin{eqnarray*}
\mid\mid (C-D)_{(\mu^*,\eta^*,\theta^*)}-(C-D)_{0\lambda/n}\mid\mid=o_p(1).
\end{eqnarray*}
We have 
\begin{proposition} \label{prop1}
Assume that the MPLE of $\theta$ is in $\Omega_1$ and that the penalty $\mbox{pen}(\theta)$ satisfies the condition $({\it C1})$. Then, when
the true value $\theta_0\not=0$, as $\lambda/\sqrt{n}\to 0$, the MPLE $(\hat{\theta}_{\lambda},\hat{\sigma}_{\lambda},\hat{\theta}_{\lambda})$ is asymptotically optimal in the sense that it is asymptotically unbiased and attains the Cramer-Rao low bound to estimation error.
\end{proposition}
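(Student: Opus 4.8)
The plan is to solve the linearised stationarity equation (\ref{case1eq}) for the rescaled estimation error and then read off its limiting law. Write $\hat\beta_\lambda=(\hat\mu_\lambda,\hat\eta_\lambda,\hat\theta_\lambda)$ and $\beta_0=(\mu_0,\eta_0,\theta_0)$. Assuming $(C-D)_{(\mu^*,\eta^*,\theta^*)}$ is invertible (justified below), I rearrange (\ref{case1eq}) into
\[
\sqrt n(\hat\beta_\lambda-\beta_0)^T=-(C-D)_{(\mu^*,\eta^*,\theta^*)}^{-1}\Big[\frac{1}{\sqrt n}\frac{\partial l_{inc}}{\partial\beta_0^T}-\frac{\lambda}{2\sqrt n}(e^{2\theta_0}-e^{-2\theta_0})\e_3\Big].
\]
First I would treat the three ingredients on the right separately: the normalised score $\frac{1}{\sqrt n}\partial l_{inc}/\partial\beta_0^T$, the penalty bias $\frac{\lambda}{2\sqrt n}(e^{2\theta_0}-e^{-2\theta_0})\e_3$, and the curvature matrix $(C-D)_{(\mu^*,\eta^*,\theta^*)}$.

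For the score, the summands are i.i.d.\ with mean zero at $\beta_0$ and the normalised score has covariance converging to the Fisher information $-C_0$, so the central limit theorem gives $\frac{1}{\sqrt n}\partial l_{inc}/\partial\beta_0^T\stackrel{d}{\to}N(0,-C_0)$; the required finite moments hold for the skew normal under the bounded-derivative reparametrisation of Section 2. For the penalty bias, the factor $e^{2\theta_0}-e^{-2\theta_0}$ is a fixed nonzero constant because $\theta_0\neq0$, so the whole term is a deterministic vector of order $\lambda/\sqrt n$, hence $o(1)$, precisely by the hypothesis $\lambda/\sqrt n\to0$; this term is the sole source of first-order bias, and $\theta_0\neq0$ together with the rate assumption is exactly what removes it. For the curvature matrix, the uniform law of large numbers quoted before the proposition gives $(C-D)_{(\mu^*,\eta^*,\theta^*)}=C_0-D_{0\lambda/n}+o_p(1)$, and since $\lambda/\sqrt n\to0$ forces $\lambda/n=(\lambda/\sqrt n)n^{-1/2}\to0$, the penalty Hessian $D_{0\lambda/n}=\mbox{diag}(0,0,\frac{\lambda}{n}(e^{2\theta_0}+e^{-2\theta_0}))$ vanishes, so $(C-D)_{(\mu^*,\eta^*,\theta^*)}\to C_0$ in probability. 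Because $\theta_0\neq0$, the Fisher information $-C_0$ is nonsingular, so $C_0^{-1}$ exists and the matrix is invertible with probability tending to one, validating the rearrangement above.

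Assembling these by Slutsky's theorem, the two negligible contributions drop out and
\[
\sqrt n(\hat\beta_\lambda-\beta_0)^T=-C_0^{-1}\,\frac{1}{\sqrt n}\frac{\partial l_{inc}}{\partial\beta_0^T}+o_p(1)\stackrel{d}{\to}N\big(0,\,C_0^{-1}(-C_0)C_0^{-1}\big)=N\big(0,(-C_0)^{-1}\big).
\]
The limiting distribution has mean zero, so the MPLE is asymptotically unbiased, and its asymptotic covariance equals the inverse Fisher information $(-C_0)^{-1}$, the Cramer--Rao lower bound; this is exactly the claimed asymptotic optimality.

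The main obstacle I expect is not the algebra but the legitimacy of the Taylor expansion (\ref{case1eq}): one must know that the intermediate point $(\mu^*,\eta^*,\theta^*)$, lying on the segment joining $\beta_0$ to $\hat\beta_\lambda$, stays inside the box $\Omega_1$ on which the uniform convergence of $(C-D)$ has been established. Since $\Omega_1$ is convex and contains $\beta_0$, this reduces precisely to the standing hypothesis that the MPLE lies in $\Omega_1$, which then keeps the whole segment, and hence $(\mu^*,\eta^*,\theta^*)$, in $\Omega_1$. A secondary technical point is to confirm that the i.i.d.\ score and Hessian satisfy the moment and uniform-integrability conditions underpinning the central limit theorem and the uniform law of large numbers over the whole range $0\le\lambda/n\le\omega_0$; these follow from the smoothness and exponential tails of $\log\Phi$ and its derivatives evaluated at $A_i=\alpha(z_i+\delta\sqrt{2/\pi})$.
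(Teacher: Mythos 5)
Your proposal is correct and takes essentially the same route as the paper's proof: linearise the penalised score equation (\ref{case1eq}) at the truth, invert the curvature matrix (valid since $-C_0$ is nonsingular when $\theta_0\neq 0$), and observe that the penalty contributes a bias of order $\lambda/\sqrt{n}$ which vanishes under the rate hypothesis, yielding asymptotic normality with covariance $(-C_0)^{-1}$. The only cosmetic difference is that the paper retains the finite-$\lambda/n$ expressions for the asymptotic mean $M_{0\lambda/n}$ and variance $V_{0\lambda/n}$ via a Sherman--Morrison rank-one inversion of $(C-D)_{0\lambda/n}$ before letting $\lambda/\sqrt{n}\to 0$, whereas you send $D_{0\lambda/n}\to 0$ immediately and conclude by Slutsky's theorem.
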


As in practice, the true value $\theta_0$ is unknown, we have to use a data-driven cross-validation to tun the penalty. In the following theorem, we show that $\lambda_{cv}/\sqrt{n}\to 0$ and the cross-validated MPLE is asymptotically optimal in terms of mean square error when the underlying $\theta_0\not=0$. 

\begin{theorem}\label{th1}
 Assume that the MPLE of $\theta$ is in $\Omega_1$ and that the penalty $\mbox{pen}(\theta)$ satisfies the condition $({\it C1})$. Then, when the true value $\theta_0\not=0$, we have
$\lambda_{cv}/\sqrt{n}\to 0$ in probability and the MPLE $(\hat{\theta}_{\lambda_{cv}},\hat{\sigma}_{\lambda_{cv}},\hat{\theta}_{\lambda_{cv}})$ is asymptotically unbiased and attains the Cramer-Rao low bound to estimation error.
\end{theorem}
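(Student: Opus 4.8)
\medskip
\noindent\emph{Proof plan.} The strategy is to first pin down the rate $\lambda_{cv}/\sqrt n\to 0$ and then invoke Proposition~\ref{prop1}. Write $\psi=(\mu,\eta,\theta)$, $\psi_0=(\mu_0,\eta_0,\theta_0)$, and let $R(\psi)=-E_{y^*}[\log f(y^*;\psi)]$ be the per-observation risk, which is minimised at $\psi_0$, so that $\nabla R(\psi_0)=0$ and $\nabla^2 R(\psi_0)=-C_0$. A second-order Taylor expansion of each validation term $-l_{inc}(\hat\psi_{[-j]\lambda}\mid\by_j)$ about $\psi_0$, valid uniformly in $0\le\lambda/n\le\omega_0$, reduces the criterion to
\[
\cv_a(\lambda)=T_0+\frac{1}{2K}\sum_{j=1}^{K}|[j]|\,(\hat\psi_{[-j]\lambda}-\psi_0)^{T}(-C_0)(\hat\psi_{[-j]\lambda}-\psi_0)+o_p(1),
\]
where $T_0$ does not depend on $\lambda$. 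Thus, up to a $\lambda$-free term, minimising $\cv_a$ is governed by the $(-C_0)$-weighted squared errors of the fold estimators.

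Next I would insert the uniform expansion (\ref{case1eq}), applied to each training set of size $m=n(K-1)/K$. Since $0\le\lambda/n\le\omega_0$ forces $D_{0\lambda/n}\to 0$ in the relevant regime, the penalty enters only through the deterministic shift $\tfrac{\lambda}{2\sqrt m}b_0\,\e_3$ with $b_0=e^{2\theta_0}-e^{-2\theta_0}\ne 0$, giving
\[
\sqrt m(\hat\psi_{[-j]\lambda}-\psi_0)=C_0^{-1}\Big(-\tfrac{1}{\sqrt m}\nabla l_{inc}(\psi_0\mid\by_{[-j]})+\tfrac{\lambda}{2\sqrt m}b_0\,\e_3\Big)+o_p(1).
\]
Taking expectations and using $\tr\!\big((-C_0)(-C_0^{-1})\big)=3$, the deterministic part of the criterion becomes, up to the $\lambda$-free term, $\tfrac{1}{2(K-1)}\big(3+\tfrac{\lambda^2}{4m}b_0^2[(-C_0^{-1})]_{33}\big)+o(1)$, a strictly increasing function of $\lambda^2/n$. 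Hence $E[\cv_a(\lambda)]$ is asymptotically minimised by penalties with $\lambda/\sqrt n\to0$; moreover, at any fixed $\omega>0$ the value at $\lambda=n\omega$ exceeds that near $\lambda=0$ by an amount of exact order $n$, which disposes of the large-penalty regime.

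The delicate point is the stochastic behaviour of $\cv_a(\lambda)-\cv_a(0)$ near the $\sqrt n$ scale. Expanding the increment, its leading fluctuation splits into a contribution from the validation score $\nabla l_{inc}(\psi_0\mid\by_j)$ and a contribution from the training fluctuation $\hat\psi_{[-j]0}-\psi_0$, each of exact order $\lambda/\sqrt n$ and mean zero. The crux is that, because each observation lies in exactly $K-1$ training sets, both are proportional to the same full-sample quantity $\e_3^{T}C_0^{-1}\nabla l_{inc}(\psi_0\mid\by)$ with coefficients $-\tfrac{1}{2m K}$ and $+\tfrac{n(K-1)}{2m^{2}K^{2}}$; under $m=n(K-1)/K$ these cancel exactly. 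This cancellation---the built-in unbiasedness of cross-validation for prediction error---leaves
\[
\cv_a(\lambda)-\cv_a(0)=\gamma\,(\lambda/\sqrt n)^2+o_p\big((\lambda/\sqrt n)^2\big)+o_p(1),\qquad \gamma=\frac{K\,b_0^2[(-C_0^{-1})]_{33}}{8(K-1)^2}>0,
\]
uniformly over the range. Combined with a stochastic-equicontinuity bound on the $\lambda$-dependent fluctuations (obtained from the Lipschitz dependence $\hat\psi_{[-j]\lambda}-\hat\psi_{[-j]0}=O_p(\lambda/n)$ and a uniform law of large numbers for the smooth increments of $\log\Phi$), a standard argmin argument forces $\lambda_{cv}/\sqrt n\to 0$ in probability. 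Finally, since (\ref{case1eq}) and hence the conclusion of Proposition~\ref{prop1} hold uniformly over every sequence with $\lambda/\sqrt n\to0$, substituting the random $\lambda_{cv}$ shows that $(\hat\mu_{\lambda_{cv}},\hat\eta_{\lambda_{cv}},\hat\theta_{\lambda_{cv}})$ is asymptotically unbiased and attains the Cram\'er--Rao bound.

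I expect the main obstacle to be establishing this exact cancellation of the two order-$\lambda/\sqrt n$ random terms uniformly in $\lambda$, together with the transfer of Proposition~\ref{prop1} from deterministic to data-driven $\lambda$; the Taylor-expansion and uniform-law-of-large-numbers steps are routine given the uniform expansion already in hand.
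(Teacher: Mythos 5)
Your proposal is correct and reaches the same destination by a recognisably parallel but not identical route. Both arguments reduce $\cv_a(\lambda)$, via the uniform expansion (\ref{case1eq}) applied fold by fold, to a quadratic in $\lambda/\sqrt{n}$ with strictly positive curvature proportional to $(e^{2\theta_0}-e^{-2\theta_0})^2\,\e_3^T(-C_0)^{-1}\e_3$, dispose of the regime $\lambda/\sqrt{n}\to\infty$ by the divergence of that quadratic, and then feed $\lambda_{cv}/\sqrt{n}\to 0$ into Proposition \ref{prop1}. The genuine difference is the centring of the second-order expansion of the validation loss. The paper expands $l_{inc}(\hat{\mu}_{[-j]\lambda},\hat{\eta}_{[-j]\lambda},\hat{\theta}_{[-j]\lambda}\mid{\mathbf y}_j)$ around the fold-specific MLE $(\hat{\mu}_j,\hat{\eta}_j,\hat{\theta}_j)$, so the linear term vanishes identically and the criterion is a pure quadratic form in $\hat{\psi}_{[-j]\lambda}-\hat{\psi}_j$; the cancellation you flag as the main obstacle then happens automatically inside the algebra, surfacing as the factor $W_{0\lambda/n_{-j}}(I-W_{0\lambda/n_{-j}})=\frac{\lambda}{n}O(1)$ in $b_{\lambda/n}$, which demotes the would-be $O_p(\lambda/\sqrt{n})$ cross term to order $\lambda^2/n^{3/2}$. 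You instead centre at $\psi_0$, which leaves a non-vanishing validation-score linear term, and you must verify the cancellation by hand. Your verification is correct: with $n_j=n/K$ and $m=n(K-1)/K$ one has $\frac{1}{2mK}=\frac{n(K-1)}{2m^2K^2}$, so the two mean-zero $O_p(\lambda/\sqrt{n})$ contributions (both proportional to $\e_3^TC_0^{-1}\nabla l_{inc}(\psi_0\mid{\mathbf y})$) cancel exactly, and your curvature $\gamma=\frac{Kb_0^2[(-C_0)^{-1}]_{33}}{8(K-1)^2}$ agrees with the $W_{0\lambda/n_{-j}}\to I$ limit of the paper's $a_{\lambda/n}$. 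What your route buys is a conceptual explanation -- the cancellation is exactly the unbiasedness of cross-validation for prediction error, made visible rather than buried in matrix algebra -- at the price of more delicate bookkeeping; the paper's centring trick buys a linear-term-free expansion at the price of introducing the auxiliary fold MLEs $\hat{\psi}_j$ and the $W$-calculus.

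One bookkeeping caution: your opening display, $\cv_a(\lambda)=T_0+\frac{1}{2K}\sum_j|[j]|(\hat{\psi}_{[-j]\lambda}-\psi_0)^T(-C_0)(\hat{\psi}_{[-j]\lambda}-\psi_0)+o_p(1)$ with $T_0$ free of $\lambda$, is not literally valid uniformly in $\lambda$, because it silently absorbs the linear validation-score term whose $\lambda$-dependent part is $O_p(\lambda/\sqrt{n})$ -- precisely the quantity your later cancellation argument is needed for. Since your ``delicate point'' paragraph reinstates and cancels that term, the overall proof is coherent, but the display should carry the linear term until after the cancellation. Your final step -- noting that (\ref{case1eq}) and hence Proposition \ref{prop1} hold uniformly over sequences with $\lambda/\sqrt{n}\to 0$ before substituting the random $\lambda_{cv}$ -- is actually more careful than the paper, which asserts the transfer without comment.
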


Let $z_{i0}=(y_i-\mu_0)/\sigma_0.$ In the next proposition, we show that  a fixed $\lambda$ in the Q penalty may give rise to a biased estimate of $\alpha$.
\begin{proposition}\label{prop2}
Assume that the MPLE of $\theta$ is in $\Omega_1$ and that the penalty $\mbox{pen}(\theta)$ satisfies the condition $({\it C1})$. Then, when the true value $\theta_0=0$, we have 
$\hat{\theta}_{\lambda}=0$ for
$
\lambda\ge\max\left\{\sum_{i=1}^n\left(1-z_{i0}^2\right)/{\pi},0\right\}.
$
And on $\sum_{i=1}^n\left(1-z_{i0}^2\right)>0$, $\hat{\theta}_{\lambda_{cv}}$ is non-zero for $0\le\lambda<\sum_{i=1}^n\left(1-z_{i0}^2\right)/{\pi}$.
\end{proposition}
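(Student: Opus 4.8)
The plan is to reduce the claim to a statement about the shape of the penalised log-likelihood along the $\theta$-axis. First I would note that $\hat\theta_\lambda=0$ exactly when the maximiser of $l_{incp}$ over $\Omega_1$ has vanishing $\theta$-coordinate, and that $\theta=0$ is always a stationary point in the $\theta$-direction for every $(\mu,\eta)$: the gradient formula for $\partial l_{inc}/\partial\theta$ evaluated at $\theta=0$ gives $0$, since $\alpha=0$, $\delta=0$ make the two terms cancel (using $\phi(0)/\Phi(0)=\sqrt{2/\pi}$ and $\cosh 0=1$), while $\mbox{pen}'(0)=0$ by condition $({\it C1})$. Hence the sign of the $\theta$-curvature at $\theta=0$ is what decides whether this stationary point is a maximiser, and the whole dichotomy should fall out of a single second-derivative computation.

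The crux is therefore $\partial^2 l_{inc}/\partial\theta^2$ at $\theta=0$. Writing $\ell_i=-\frac12(z_i+\delta\sqrt{2/\pi})^2+\log\Phi(\alpha(z_i+\delta\sqrt{2/\pi}))$ for the $\theta$-dependent part of the $i$-th summand of $l_{inc}$, and $h(A)=\phi(A)/\Phi(A)$ for the inverse Mills ratio (so $h(0)=\sqrt{2/\pi}$ and $h'(0)=-2/\pi$), I would differentiate the $\theta$-score once more through the activation maps $\alpha=\sinh\theta$, $\delta=\tanh\theta$ and the shift $\delta\sqrt{2/\pi}$. A direct calculation then yields
$$\frac{\partial^2\ell_i}{\partial\theta^2}\Big|_{\theta=0}=\frac{2}{\pi}\big(1-z_i^2\big),\qquad z_i=(y_i-\mu)/\sigma.$$
Evaluating at the ground-truth $(\mu_0,\eta_0)$ and summing gives $\frac2\pi\sum_{i=1}^n(1-z_{i0}^2)$; subtracting the penalty contribution $\lambda\,\mbox{pen}''(0)=2\lambda$ produces the penalised curvature $\frac2\pi\sum(1-z_{i0}^2)-2\lambda$. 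I would also record that the mixed second derivatives $\partial^2 l_{inc}/\partial\mu\,\partial\theta$ and $\partial^2 l_{inc}/\partial\eta\,\partial\theta$ vanish identically at $\theta=0$ (each summand cancels), so the $\theta$-direction decouples from the nuisance block to leading order and no Schur-complement term enters the threshold.

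With the curvature in hand both assertions follow by a sign comparison. When $\lambda\ge\max\{\frac1\pi\sum(1-z_{i0}^2),0\}$ the penalised curvature is non-positive, so $\theta=0$ is a local maximiser in $\theta$; using that $\mbox{pen}(\theta)\to\infty$ while the likelihood term $\sum_i\log\Phi(\alpha(z_i+\delta\sqrt{2/\pi}))$ stays bounded above, the penalised objective cannot have a competing maximiser far from $0$, and after restricting to $\Omega_1$ I would conclude $\hat\theta_\lambda=0$. Conversely, on the event $\{\sum(1-z_{i0}^2)>0\}$ and for $0\le\lambda<\frac1\pi\sum(1-z_{i0}^2)$ the penalised curvature is strictly positive, so $\theta=0$ is a local minimum along $\theta$ of the penalised objective and cannot be its maximiser, giving $\hat\theta_\lambda\neq0$.

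The main obstacle is the passage from this local second-order statement to the exact global dichotomy, together with the nuisance-parameter bookkeeping. Because the model is singular at $\theta=0$ the curvature can itself degenerate — indeed at the normal MLE one has $\sum(1-\hat z_i^2)=0$ — so the second-order test is delicate and I would need to control the quartic behaviour and rule out rival maxima elsewhere in $\Omega_1$ via the growth of $\mbox{pen}$ and monotonicity of $\mbox{pen}'$; this is precisely where confining the argument to the shrinking domain $\Omega_1$ is essential. Relatedly, converting the curvature computed at $(\mu_0,\eta_0)$ into a statement about the joint maximiser relies on the vanishing of the mixed derivatives above together with the consistency of $(\hat\mu,\hat\eta)$, so that the standardised residuals entering the threshold may legitimately be taken as $z_{i0}$.
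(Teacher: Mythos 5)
Your local computations are all correct and agree with the paper's: $\theta=0$ is a stationary point of $l_{incp}$ in the $\theta$-direction for every $(\mu,\eta)$, the per-observation curvature is $\partial^2\ell_i/\partial\theta^2\big|_{\theta=0}=\tfrac{2}{\pi}(1-z_i^2)$ (this is exactly the paper's $c_{33}$, and with $\mbox{pen}''(0)=2$ it gives the paper's $I_{220}=\tfrac{2}{\pi n}\sum_{i}(1-z_{i0}^2)-2\lambda/n$), and the mixed partials $\partial^2 l_{inc}/\partial\mu\,\partial\theta$, $\partial^2 l_{inc}/\partial\eta\,\partial\theta$ do vanish identically at $\theta=0$. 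So you have identified the right threshold. The genuine gap is the step you yourself flag and then paper over: passing from the curvature sign at $(\mu_0,\eta_0,0)$ to a statement about the joint maximiser. Your proposed repair --- vanishing mixed derivatives plus \emph{consistency} of $(\hat{\mu},\hat{\eta})$, ``so that the standardised residuals entering the threshold may legitimately be taken as $z_{i0}$'' --- does not work, because the deciding quantity $\tfrac1n\sum_i(1-z_{i0}^2)=O_p(n^{-1/2})$ is itself a fluctuation-scale object. As you note, at the constrained normal MLE one has $\sum_i(1-\hat z_i^2)=0$ exactly; hence swapping true for fitted residuals changes the test statistic by precisely the order of the threshold itself, and mere consistency of the nuisance estimates cannot justify the substitution. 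Relatedly, your claim that ``no Schur-complement term enters'' is too strong: the off-diagonal blocks vanish only \emph{at} $\theta=0$, while at intermediate points $\theta^*$ of the Taylor expansion they are $O(\theta^*)$, so the coupling re-enters any value comparison at nonzero $\hat\theta_\lambda$.

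The paper closes exactly this gap by a different, more exact route than a pointwise second-derivative test. It expands the penalised score equations blockwise at the ground truth, observes that $I_{12}\mid_{(\mu^*_\lambda,\eta^*_\lambda,0)}=I_{21}\mid_{(\mu^*_\lambda,\eta^*_\lambda,0)}=0$ implies $\hat\theta_\lambda=0$ solves the estimating equations exactly, and then substitutes these relations into a quadratic expansion of $\tfrac1n l_{incp}$ at $(\mu_0,\eta_0,0)$ to get the representation $\tfrac1n l_{incp}(\hat{\mu}_\lambda,\hat{\eta}_\lambda,\hat{\theta}_\lambda\mid\mathbf{y})=\mbox{const}-\tfrac12\bigl(-I_{220}+I^*_{21\lambda}(-I_{110})^{-1}I^*_{12\lambda}\bigr)\hat{\theta}_\lambda^2(1+o_p(1))$, where the constant is free of $\hat\theta_\lambda$ and $I_{220}$ is evaluated at the truth; the dichotomy is then a comparison of attained values, with $I_{220}\le 0$ forcing $\hat\theta_\lambda=0$, and the nonzero-maximiser regime explicitly carries the Schur correction you dropped, via the condition $0\le\lambda/\sqrt n<\tfrac{1}{\pi\sqrt n}\sum_i(1-z_{i0}^2)-\tfrac{\sqrt n}{2}I^*_{21\lambda}(-I_{110})^{-1}I^*_{12\lambda}$. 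Note also that rival maxima are excluded not by the growth of $\mbox{pen}$ (which gives nothing when $\lambda=0$, an admissible case since the first claim covers $\sum_i(1-z_{i0}^2)\le 0$) but by the standing restriction of the MPLE to the shrinking domain, on which the quadratic expansion is valid uniformly. To make your proposal into a proof you would need to reproduce this bookkeeping rather than appeal to consistency.
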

The above proposition implies that when the true value $\theta_0=0$, we have
\begin{enumerate}
 \item [(i)] As $\lambda/\sqrt{n}\to\infty$, we have $\hat{\theta}_{\lambda}=0.$
\item [(ii)] On $\frac 1{\sqrt{n}}\sum_{i=1}^n
\left(1-z_{i0}^2 \right)\le 0 $, for any $\lambda\ge 0,$ we have
$\hat{\theta}_{\lambda}=0.$
\item [(iii)] On $\frac 1{\sqrt{n}}\sum_{i=1}^n
\left(1-z_{i0}^2 \right)>0 $, there is $\lambda$ such that $\frac 1n l_{incp}(\hat{\mu}_{\lambda},\hat{\eta}_{\lambda},\hat{\theta}_{\lambda}|\by)$ attains the maximum at non-zero $\hat{\theta}_{\lambda}.$ 
\end{enumerate}

Let $\hat{\mu}_{j}, \hat{\eta}_j$ be the MLEs of $\mu$ and $\eta$ based on the subsample $\bf{y}_j$ when $\theta$ is known to be zero. Let $\hat{\mu}_{[-j]\lambda},$ 
$\hat{\eta}_{[-j]\lambda}$ and $\hat{\theta}_{[-j]\lambda}$ be the panalised MLEs based on the remaining observations
$(\mathbf{y}_i)_{i\not=j}$ after removing $\mathbf{y}_j$ from $\mathbf{y}.$ For notational simplicity, let $I^{*[-j]}_{11\lambda}$, $I^{*[-j]}_{12\lambda}$, $I^{*[-j]}_{21\lambda}$ and 
 $I^{*[-j]}_{22\lambda}$ denote $I_{11}\mid_{(\mu^*_{[-j]\lambda},\eta^*_{[-j]\lambda},\theta^*_{[-j]\lambda})}$, $I_{12}\mid_{(\mu^*_{[-j]\lambda},\eta^*_{[-j]\lambda},\theta^*_{[-j]\lambda})}$, $I_{21}\mid_{(\mu^*_{[-j]\lambda},\eta^*_{[-j]\lambda},\theta^*_{[-j]\lambda})} $ and 
 \newline $I_{22}\mid_{(\mu^*_{[-j]\lambda},\eta^*_{[-j]\lambda},\theta^*_{[-j]\lambda})}$ respectively.
 Then $I_{11}\mid_{(\mu^*_{[-j]\lambda},\eta^*_{[-j]\lambda},\theta^*_{[-j]\lambda})}=I_{110}(1+o_p(1))$. 
Let $$\lambda_r=\max_{1\le j\le K}\max\{\sum_{i\in [-j]}(1-z_{i0}^2)/\pi,0\}$$ The next theorem shows that if the underlying value of $\theta$ is zero, then the $\cv_a(\lambda)$ attains a local minimum when $\lambda\ge\lambda_r$. 

\begin{theorem}\label{th2}
Assume that the true value $\theta_0=0$, the MPLE of $\theta$ is in $\Omega_1$ and that the penalty $\mbox{pen}(\theta)$ satisfies the condition $({\it C1})$. Then, for 
$\lambda\ge \lambda_r$, the function $\cv_a(\lambda)$ is asymptotically flat, that is
\begin{eqnarray*}
\cv_a(\lambda)&=&-\frac 1K\sum_{j=1}^K
l_{inc}(\hat{\mu}_j,\hat{\eta}_j,0\mid {\mathbf y}_j)
+\frac 1{2K}\sum_{j=1}^Kn_j\left(\frac 1{n_{[-j]}}\sum_{i\in [-j]} \bu_i
-\frac 1{n_{j}}\sum_{i\in [j]}\bu_i\right)^T\\
&&\times (-I_{110})^{-1}
\left(\frac 1{n_{[-j]}}\sum_{i\in [-j]} \bu_i
-\frac 1{n_{j}}\sum_{i\in [j]}\bu_i\right) (1+o_p(1)).
\end{eqnarray*}
\end{theorem}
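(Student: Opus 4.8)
The plan is to use the screening effect of Proposition~\ref{prop2} to collapse the cross-validated criterion onto the Gaussian submodel $\theta=0$, and then to expand each validation log-likelihood about its own fold maximiser. First I would apply Proposition~\ref{prop2} fold by fold: for the training set indexed by $[-j]$ the relevant threshold is $\max\{\sum_{i\in[-j]}(1-z_{i0}^2)/\pi,\,0\}$, and by construction $\lambda_r$ dominates every one of these. Hence for $\lambda\ge\lambda_r$ we have $\hat\theta_{[-j]\lambda}=0$ simultaneously for all $j$. Because condition $(\mbox{\it C1})$ gives $\mbox{pen}(0)=\mbox{pen}'(0)=0$, the penalty and its $(\mu,\eta)$-derivatives vanish at $\theta=0$, so maximising $l_{incp}(\cdot,\cdot,0\mid\by_{[-j]})$ over $(\mu,\eta)$ reduces to the ordinary normal fit on the training fold; write these as $\hat\mu_{[-j]},\hat\eta_{[-j]}$, which no longer depend on $\lambda$. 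This already delivers the asserted flatness: on $\{\lambda\ge\lambda_r\}$ the summands of $\cv_a(\lambda)$ cease to vary with $\lambda$, leaving $\cv_a(\lambda)=-\frac1K\sum_{j=1}^K l_{inc}(\hat\mu_{[-j]},\hat\eta_{[-j]},0\mid\by_j)$.

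Next I would fix a fold $j$ and expand $(\mu,\eta)\mapsto l_{inc}(\mu,\eta,0\mid\by_j)$ about its validation maximiser $(\hat\mu_j,\hat\eta_j)$. Since the gradient vanishes there, a second-order Taylor expansion leaves
\begin{eqnarray*}
-l_{inc}(\hat\mu_{[-j]},\hat\eta_{[-j]},0\mid\by_j)
&=&-l_{inc}(\hat\mu_j,\hat\eta_j,0\mid\by_j)\\
&&+\frac{n_j}2(\hat\mu_{[-j]}-\hat\mu_j,\hat\eta_{[-j]}-\hat\eta_j)(-I^{*[-j]}_{11\lambda})(\hat\mu_{[-j]}-\hat\mu_j,\hat\eta_{[-j]}-\hat\eta_j)^T+R_j,
\end{eqnarray*}
where the quadratic kernel is the (per-observation) observed information of the validation fold at an intermediate point and $R_j$ is the cubic remainder. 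Using the stated convergence $I_{11}\mid_{(\mu^*_{[-j]\lambda},\eta^*_{[-j]\lambda},\theta^*_{[-j]\lambda})}=I_{110}(1+o_p(1))$, I would replace this kernel by $-I_{110}$ up to a $(1+o_p(1))$ factor.

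The final step is to linearise both fold MLEs through the $(\mu,\eta)$-score $\bu_i$ evaluated at the truth. The normal-model estimating equations give $(\hat\mu_{[-j]}-\mu_0,\hat\eta_{[-j]}-\eta_0)^T=(-I_{110})^{-1}\frac1{n_{[-j]}}\sum_{i\in[-j]}\bu_i+o_p(n^{-1/2})$ and likewise $(\hat\mu_j-\mu_0,\hat\eta_j-\eta_0)^T=(-I_{110})^{-1}\frac1{n_j}\sum_{i\in[j]}\bu_i+o_p(n^{-1/2})$. Subtracting, the difference of the two fits equals $(-I_{110})^{-1}(\frac1{n_{[-j]}}\sum_{i\in[-j]}\bu_i-\frac1{n_j}\sum_{i\in[j]}\bu_i)$ to the same order; substituting this into the quadratic form cancels one factor of $-I_{110}$ and, after averaging over $j$, produces exactly the display claimed in the theorem.

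The hard part will be the uniform control of the remainder. Since each fold MLE is $\mu_0+O_p(n^{-1/2})$, the difference $\hat\mu_{[-j]}-\hat\mu_j$ is $O_p(n^{-1/2})$, so the quadratic term is $O_p(1)$ while $R_j=O_p(n_j\cdot n^{-3/2})=O_p(n^{-1/2})=o_p(1)$; making this rigorous requires the weak law for the empirical Hessian on the shrinking set $\Omega_1$ and the boundedness of third derivatives already invoked ahead of Proposition~\ref{prop1}, together with verifying that all the $o_p$ terms are uniform across the finitely many folds. A secondary point to check is that the threshold in Proposition~\ref{prop2} is expressed through the population standardisation $z_{i0}$, so applying it fold-wise is legitimate precisely because $\lambda_r$ is defined with the same $z_{i0}$.
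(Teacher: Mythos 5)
Your proposal is correct and follows essentially the same route as the paper's proof: apply Proposition \ref{prop2} fold-wise (with $\lambda_r$ dominating every training-fold threshold) to force $\hat{\theta}_{[-j]\lambda}=0$, expand the validation log-likelihood to second order at the fold MLE $(\hat{\mu}_j,\hat{\eta}_j)$ with Hessian replaced by $I_{110}(1+o_p(1))$, linearise both fold estimators through the score $\bu_i$, and substitute so that one factor of $-I_{110}$ cancels. Your additional observation that on the event $\{\hat{\theta}_{[-j]\lambda}=0 \mbox{ for all } j\}$ the training-fold $(\mu,\eta)$ estimates coincide with the unpenalised normal MLEs, making $\cv_a(\lambda)$ exactly constant in $\lambda$ there, is a slight sharpening of the paper's implicit use of the same fact.
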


\section{Penalised EM Algorithm}

We first introduce a positive latent random variable $W$ such that $(Y,W)$ has the following easily calculated \textbf{joint} density
\[
g(y,w)=\frac{2}{\sigma}\phi\left(z+\delta\sqrt{{2}/{\pi}}\right)\phi\left(w-\alpha\left(z+\delta\sqrt{{2}/{\pi}}\right)\right),\qquad y\in\mathbb{R},w\in(0,\infty),
\]
 with marginal density $f(y;\mu,\sigma,\alpha)$ for $Y$, where $z$ denotes $(y-\mu)/\sigma$.
Letting $z_i$ denote $(y_i-\mu)/\sigma(\eta)$ as before and augmenting $\mathbf{y}$ by $\mathbf{w}=(w_{i})_{1\le i\le n}$, we
form the complete data $\left(\mathbf{y},\mathbf{w}\right)=(y_{i},w_{i})_{1\le i\le n}$ with the penalised complete-data log-likelihood 
\begin{eqnarray*}
l_{comp}(\mu,\eta,\theta\mid\mathbf{y},\mathbf{w}) 
 & = &- \frac{n}{2}\log\left({\sigma(\eta)^{2}\pi^{2}}\right)-\frac{1}{2}\sum_{i=1}^{n}\left(z_i+\delta(\theta)\cdot\sqrt{{2}/{\pi}}\right)^{2}\\
 &  & -\frac{1}{2}\sum_{i=1}^{n}\left(w_{i}-\alpha(\theta)\cdot\left(z_i+\delta(\theta)\cdot\sqrt{{2}/{\pi}}\right)\right)^{2}-\lambda(e^{\theta}-e^{-\theta})^2/4.
\end{eqnarray*}
For the $v$-th iteration, let $z^{(v)}_i=({y_{i}-\hat{\mu}^{(v)}})/{\sigma(\hat{\eta}^{(v)})}$ and 
$
b_{i}=z_i+\delta(\theta)\sqrt{{2}/{\pi}}$ and $b_{i}^{(v)}=z^{(v)}_i+\delta(\hat{\theta}^{(v)})\sqrt{{2}/{\pi}}.$ The penalised EM algorithm contains the {\bf E-step} and {\bf M-step} as follows.

{\bf E-Step:} Given the estimates $\hat{\mu}^{(v)}$, $\hat{\eta}^{(v)}$ and $\hat{\theta}^{(v)}$
obtained in the $v$-th iteration, compute the conditional
expectation of the complete log-likelihood:
\begin{eqnarray*}
&&  \Psi(\mu,\eta,\theta\mid\hat{\mu}^{(v)},\hat{\eta}^{(v)},\hat{\theta}^{(v)})
  =  E_{\mathbf{w}|\mathbf{y},\hat{\mu}^{(v)},\hat{\eta}^{(v)},\hat{\theta}^{(v)}}\left[l_{com}(\mu,\eta,\theta\mid\mathbf{y},\mathbf{w})\right]-\lambda(e^{\theta}-e^{-\theta})^2/4\\
& &\qquad=- n\log(\pi)-n\eta-\frac{1}{2}\sum_{i=1}^{n}b_{i}^{2}\\
&&\qquad\quad-\frac{1}{2}\sum_{i=1}^{n}\left\{ 1+\left(\alpha(\hat{\theta}^{(v)})\cdot b_{i}^{(v)}-2\alpha(\theta)\cdot b_{i}\right)\frac{\phi\left(\alpha(\hat{\theta}^{(v)})\cdot b_{i}^{(v)}\right)}{\Phi\left(\alpha(\hat{\theta}^{(v)})\cdot b_{i}^{(v)}\right)}\right.\\
 &  &\qquad\quad \left.+\left(\alpha(\hat{\theta}^{(v)})\cdot b_{i}^{(v)}-\alpha(\theta)\cdot b_{i}\right)^{2}\right\}-\lambda(e^{\theta}-e^{-\theta})^2/4.
\end{eqnarray*}

{\bf M-Step:}
For the simplicity of notation, we denote $\Psi(\mu,\eta,\theta\mid\hat{\mu}^{(v)},\hat{\eta}^{(v)},\hat{\theta}^{(v)})$ by $\Psi$.
To maximise the conditional expectation $\Psi$, compute partial derivatives
of $\Psi$ w.r.t. $(\mu,\eta,\theta)$ and set these derivatives
equal to $0$. We solve the above partial derivative equations by alternative iterations as follows. 

Firstly, fixing $(\eta,\theta)=(\hat{\eta}^{(v)},\hat{\theta}^{(v)})$,
we update $\mu.$ It follows from $\frac{\partial\Psi}{\partial\mu}=0$
that given $\theta=\hat{\theta}^{(v)}$ and $\eta=\hat{\eta}^{(v)}$, the $(v+1)$-th update
of $\mu$,
\begin{eqnarray*}
\hat{\mu}^{(v+1)} & = & \bar{y}+\sigma(\hat{\eta}^{(v)})\cdot\delta(\hat{\theta}^{(v)})\cdot\sqrt{\frac{2}{\pi}}-\frac{\sigma(\hat{\eta}^{(v)})}{n}\cdot\frac{\alpha(\hat{\theta}^{(v)})}{1+\alpha(\hat{\theta}^{(v)})^{2}}\\
 &  & \times\sum_{i=1}^{n}\left\{ \frac{\phi\left(\alpha(\hat{\theta}^{(v)})b_{i}^{(v)}\right)}{\Phi\left(\alpha(\hat{\theta}^{(v)})b_{i}^{(v)}\right)}+\alpha(\hat{\theta}^{(v)})b_{i}^{(v)}\right\} .
\end{eqnarray*}
Secondly, fixing $(\mu,\theta)=(\hat{\mu}^{(v+1)},\hat{\theta}^{(v)})$,
we update $\eta$. It follows from $\frac{\partial\Psi}{\partial\eta}=0$
that
\begin{eqnarray*}
e^{2\eta}-T(\mu,\theta,\boldsymbol{b}^{(v)})\cdot e^{\eta}-\frac{(1+\alpha(\theta)^{2})e^{2\eta}}{n}\sum_{i=1}^{n}z_i^{2} & = & 0
\end{eqnarray*}
with
\begin{eqnarray*}
T(\mu,\theta,\boldsymbol{b}^{(v)}) & = & \left(1+\alpha(\theta)^{2}\right)\cdot\delta(\theta)\cdot\sqrt{\frac{2}{\pi}}e^{\eta}\bar{z}-\frac{\alpha(\theta)}{n}\sum_{i=1}^{n}e^{\eta}z_i\\
 &  & \times\left[\frac{\phi\left(\alpha(\hat{\theta}^{(v)})b_{i}^{(v)}\right)}{\Phi\left(\alpha(\hat{\theta}^{(v)})b_{i}^{(v)}\right)}+\alpha(\hat{\theta}^{(v)})b_{i}^{(v)}\right].
\end{eqnarray*}
Solving the above quadratic equation, 
 we update $\eta$ (and $\sigma=e^{\eta}$) via
\begin{eqnarray*}
\hat{\sigma}^{(v+1)} & = & \sigma(\hat{\eta}^{(v+1)})=e^{\hat{\eta}^{(v+1)}}
  =  \frac{1}{2}T(\hat{\mu}^{(v+1)},\hat{\theta}^{(v)},\boldsymbol{b}^{(v)})\\
&&+\sqrt{\frac{1}{4}T(\hat{\mu}^{(v+1)},\hat{\theta}^{(v)},\boldsymbol{b}^{(v)})^{2}
+\frac{1+\alpha(\hat{\theta}^{(v)})^{2}}{n}\sum_{i=1}^{n}(y_{i}-\hat{\mu}^{(v+1)})^{2}}.
\end{eqnarray*}
Finally, fixing $(\mu,\eta)=(\hat{\mu}^{(v+1)},\hat{\eta}^{(v+1)})$ and letting 
$
f\left(\theta\right)=\frac{\partial\Psi}{\partial\theta},$ $ f^{\prime}\left(\theta\right)=\frac{\partial^{2}\Psi}{\partial\theta^{2}},
$
we obtain $\theta^{(v+1)}$ by using the Newton-Raphson
iteration to solve the equation $f\left(\theta\right)=0$ .

In each update, we need to verify whether the incomplete data likelihood
is increasing. The PEM algorithm iteration alternates between\textbf{ E-step}
and \textbf{M-step} until
\begin{eqnarray*}
\left|\frac{l_{incp}(\hat{\mu}^{(v+1)},\hat{\eta}^{(v+1)},\hat{\theta}^{(v+1)}\mid\mathbf{y})-l_{incp}(\hat{\mu}^{(v)},\hat{\eta}^{(v)},\hat{\theta}^{(v)}\mid\mathbf{y})}{l_{incp}(\hat{\mu}^{(v)},\hat{\eta}^{(v)},\hat{\theta}^{(v)}\mid\mathbf{y})}\right| & < & \varepsilon
\end{eqnarray*}
where $\varepsilon$ is the tolerance with default value of $10^{-8}$. It is easy to prove that the PEM has a non-decreasing property similar to that of the standard EM. 

\section {Numerical results}
In this section, we report the results of simulation studies designed to assess the performance of our cross-validated MPLE and to
compare it to some existing methods (MLE and Q-based MPLE in the R-package SN,  https://CRAN.R-project.org/package=sn) in terms of median bias and standard error of differences between $(\hat{\mu},\hat{\sigma},\hat{\alpha})$ and the ground truth $(\mu_0,\sigma_0,\alpha_0)$. 

\subsection{Behavior of $\lambda_{cv}$}

We first examine the asymptotic behavior of $\lambda_{cv}$ by conducting the following simulation study.

{\bf Setting 1}: Assume that $Y$ follows a skew normal with unknown parameters $(\mu,\sigma,\alpha)$, where the underlying values $(\mu_0,\sigma_0)=(0,1)$  and $\alpha_0\in \{0,2,3,4\}$. We draw a sample of size
$n$ for $Y$ for each combination of $(\alpha_0,n)$, $\alpha_0\in \{0,2,3,4\}$ and $n\in\{50,100,200,300,$ $400,500,600,1000\}$. We repeat this sampling process $20$ times, obtaining $20$ replicates. 
 
We applied the proposed cross-validation procedure to each sample, obtaining the value of $(\hat{\mu},\hat{\sigma},\hat{\alpha})$ and the value of $\lambda_{cv}$. The results are displayed in Figures \ref{alpha0} and \ref{alphanot0}. The results show that
 when $\alpha_0=0$ (that is, the underlying model is a normal), the sample means and variances of these 20 simulated $\lambda_{cv}/n$ tend to a constant ($\approx 0.0035$)  and zero respectively; when $\alpha_0\not=0$ (that is, the underlying model is a skew-normal), both the sample means and variances of these 20 simulated $\lambda_{cv}/\sqrt{n}$ tend to zero. It follows from Chebyshev's inequality that $\lambda_{cv}/n$ tends to a positive constant in probability when the underlying $\alpha_0=0,$ while
  $\lambda_{cv}/\sqrt{n}$ tends to zero in probability when
$\alpha_0\not=0$. Therefore, the numerical results support the theory we develop in the previous section.

{\centerline {[Put Figure \ref{alpha0} here.]}}
\begin{center}
\captionsetup[subfloat]{position=top}
\begin{figure}[hptb]
\subfloat[\scriptsize{\text{$\alpha_0=0$}}]{
\begin{minipage}[t]{0.48\linewidth}
\centering
\includegraphics[width=2in,height=2in]{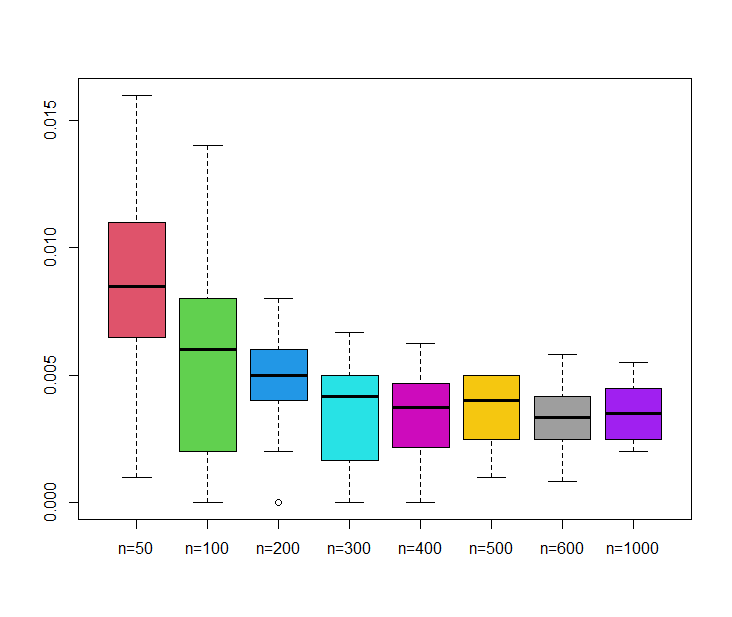}
\end{minipage}
}
\subfloat[\tiny{\text{$\alpha_0=0$}}]{
\begin{minipage}[t]{0.48\linewidth}
\centering
\includegraphics[width=2in,height=2in]{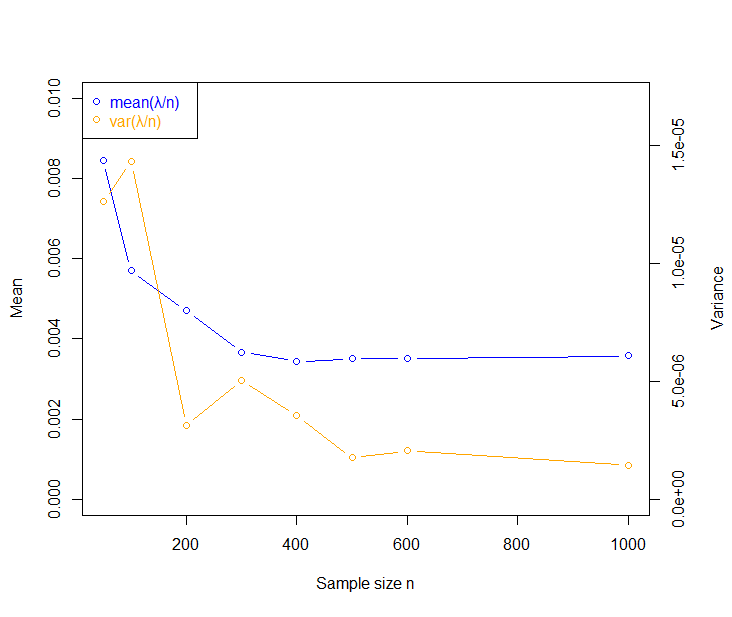}
\end{minipage}
}\\
\caption{Trend plots of $\lambda_{cv}$ when the underlying $(\mu_0,\sigma^2_0,\alpha_0)$ $=(0,1,0)$, $n=50,100,200,$$300,400,500,600$ and  $1000$.
  The box plots on the left, mean-variance chart on the right for $20$ simulated $\lambda_{cv}/n$.
 }
\label{alpha0}
\end{figure}
\end{center}

{\centerline {[Put Figure \ref{alphanot0} here.]}}
\begin{center}
\captionsetup[subfloat]{position=top}
\begin{figure}[hptb]
\subfloat[\scriptsize{\text{$\alpha_0=2$}}]{
\begin{minipage}[t]{0.48\linewidth}
\centering
\includegraphics[width=2in,height=2in]{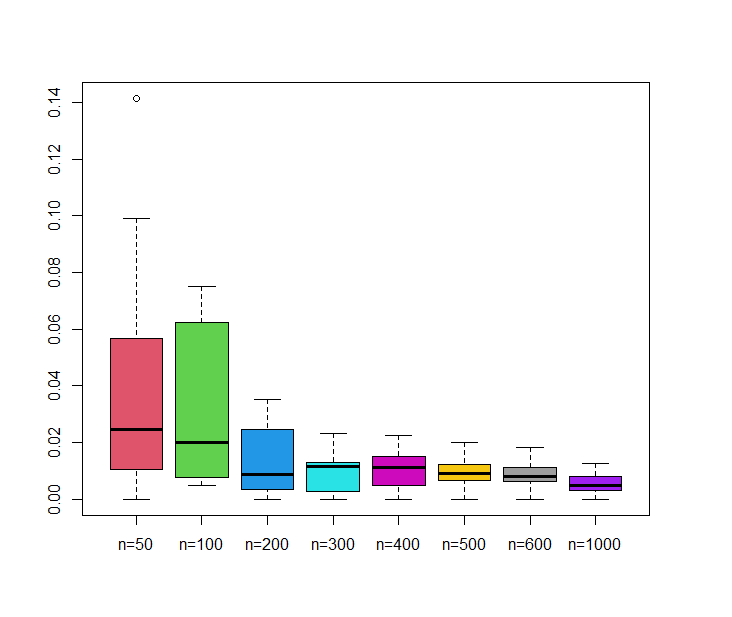}
\end{minipage}
}
\subfloat[\tiny{\text{$\alpha_0=2$}}]{
\begin{minipage}[t]{0.48\linewidth}
\centering
\includegraphics[width=2in,height=2in]{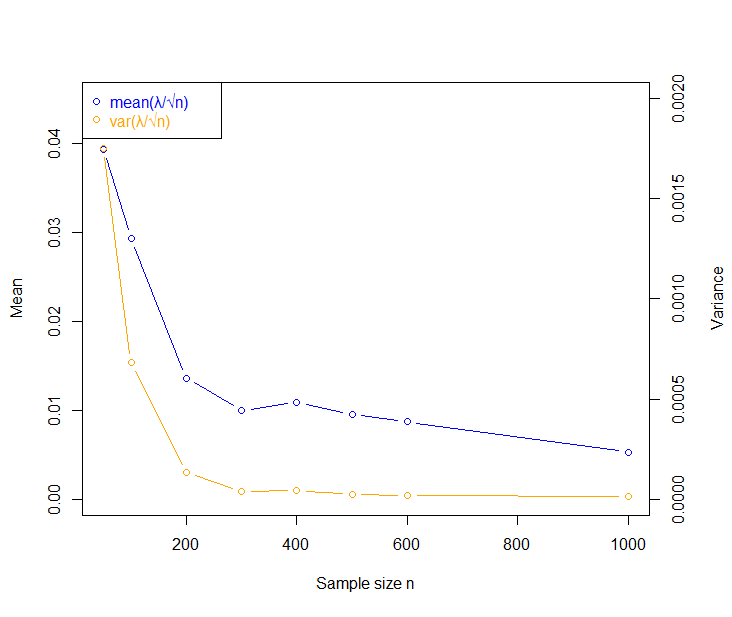}
\end{minipage}
}\\
\subfloat[\scriptsize{\text{$\alpha_0=3$}}]{
\begin{minipage}[t]{0.48\linewidth}
\centering
\includegraphics[width=2in,height=2in]{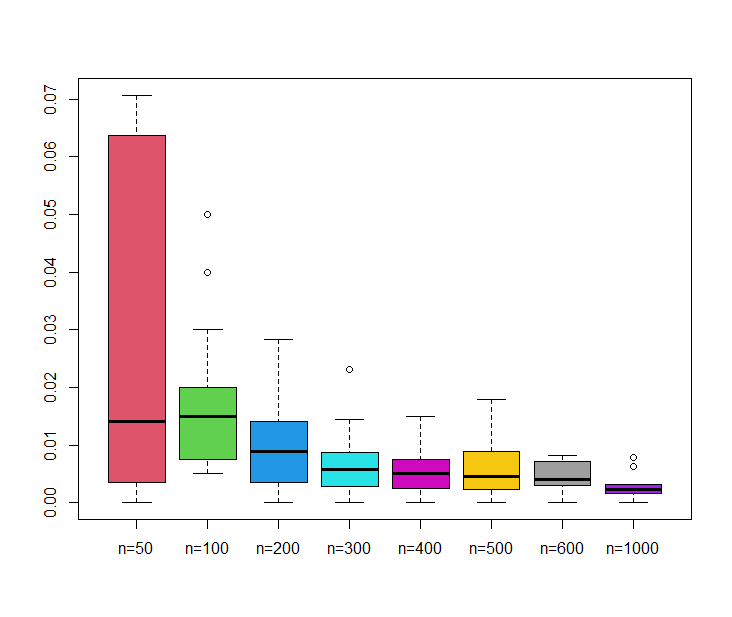}
\end{minipage}
}
\subfloat[\tiny{\text{$\alpha_0=3$}}]{
\begin{minipage}[t]{0.48\linewidth}
\centering
\includegraphics[width=2in,height=2in]{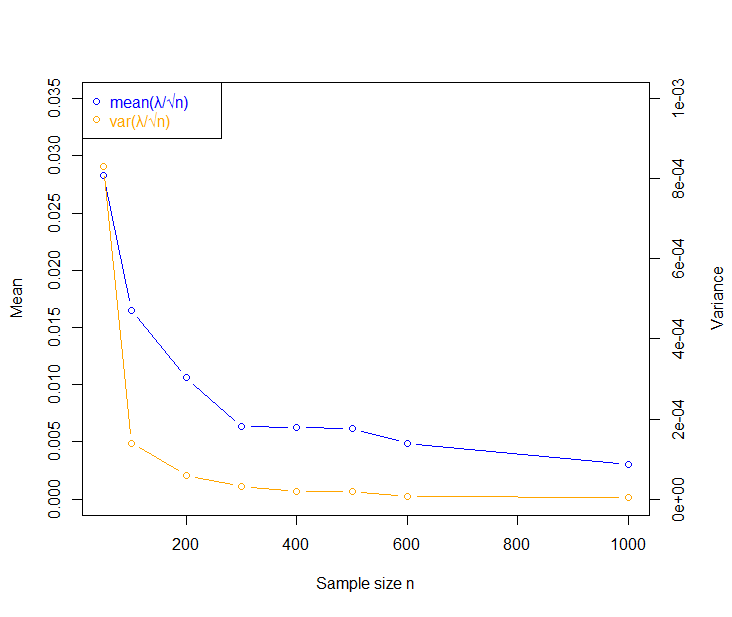}
\end{minipage}
}\\
\subfloat[\scriptsize{\text{$\alpha_0=4$}}]{
\begin{minipage}[t]{0.48\linewidth}
\centering
\includegraphics[width=2in,height=2in]{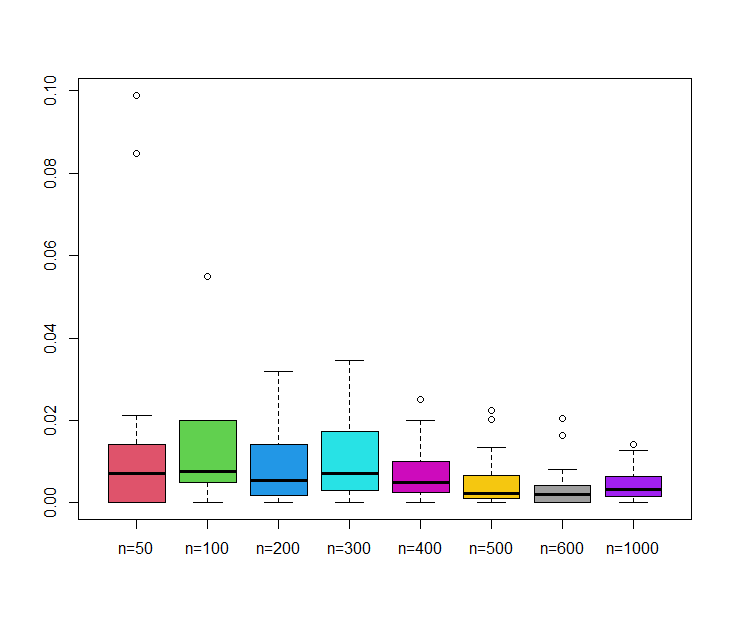}
\end{minipage}
}
\subfloat[\tiny{\text{$\alpha_0=4$}}]{
\begin{minipage}[t]{0.48\linewidth}
\centering
\includegraphics[width=2in,height=2in]{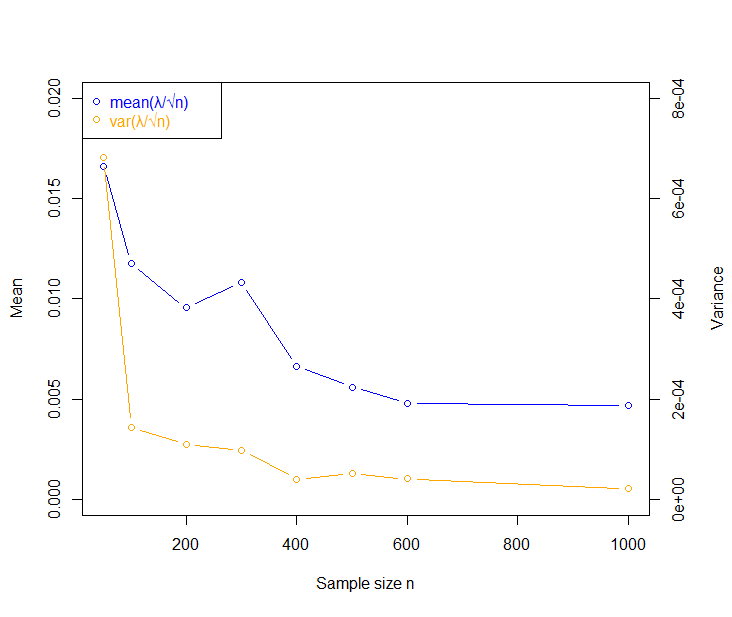}
\end{minipage}
}\\
\caption{Trend plots of $\lambda_{cv}$ when the underlying $(\mu_0,\sigma^2_0)$ $=(0,1)$, $n=50,100,200,$$300,400,500,600$ and $1000$. In each row, the box plots on the left, mean-variance chart on the right for $20$ simulated $\lambda_{cv}/\sqrt{n}$.
 Row $1$ for $\alpha_0=2$ .
Row $2$ for $\alpha_0=3$.
Row $3$ for $\alpha_0=4$.
 }
\label{alphanot0}
\end{figure}
\end{center}

\subsection{Estimation error}
The log-likelihood function is non-quadratic at the
stationary point $\alpha=0$, which makes it non-trivial to estimate. Moreover, the MLE of $\alpha$ is diverging with a positive probability. \cite{r5} proposed Q-based MPLE by maximising the penalised likelihood $l_{p}\left(\theta\right)=l\left(\theta\right)-\lambda\log\left(1+c_{2}\alpha^{2}\right)$
in order to tackle the divergent behavior of estimate $\hat{\alpha}$. Using Firth's bias correction technique, they fixed $\lambda$ and $c_{2}$ as constants with $\lambda\approx0.875913$ and $c_{2}\approx0.856250$. In contrast,
we determine the penalty coefficient $\lambda$ by $10-$fold cross-validation. 
 Intuitively, when the true value $\alpha_0$
is approaching to $0$, the penalty coefficient $\lambda$ should be relatively larger
compared to the case where $\alpha_0$ is away from $0$. Cross-validation chooses the penalty coefficient by letting dataset speak for itself.
In the next simulation study,  we demonstrate that our cross-validated MPLE can outperform the MLE and Q-based MPLE procedures when the underlying value $\alpha_0=0$. 
The result confirms the theory developed in the previous section.

{\bf Setting 2}:  We first generate $\mu_0\sim U(-2,2)$, $\sigma_0\sim U(0.5,1.5)$ and choose $\alpha_0\in \{0,1,2,3,5\}$. Then for each combination of $(\alpha_0,n)$, $\alpha\in \{0,1,2,3,5\}$ and $n\in\{50, 100,$ $200, 400\}$ and given the value of $(\mu_0,\sigma_0,\alpha_0)$, we draw samples of size
$n$ from a skew normal with parameters $(\mu_0,\sigma_0,\alpha_0)$. We repeat this sampling process $m=20$ times, obtaining $m$ replicates. 

{\centerline {[Put Figure \ref{accuracyalp0} here.]}}
\begin{center}
\captionsetup[subfloat]{position=top}
\begin{figure}[hptb]
\subfloat[\scriptsize{\text{$\alpha_0=0$}}]{
\begin{minipage}[t]{0.48\linewidth}
\centering
\includegraphics[width=2in,height=2.in]{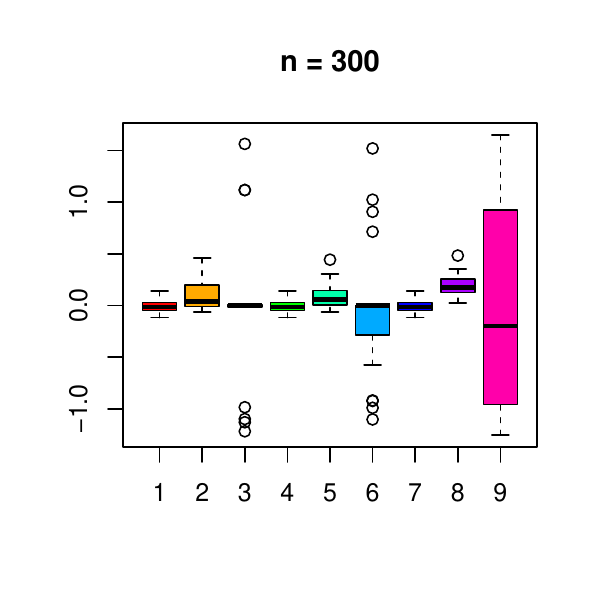}
\end{minipage}
}
\subfloat[\scriptsize{\text{$\alpha_0=0$}}]{
\begin{minipage}[t]{0.48\linewidth}
\centering
\includegraphics[width=2in,height=2.in]{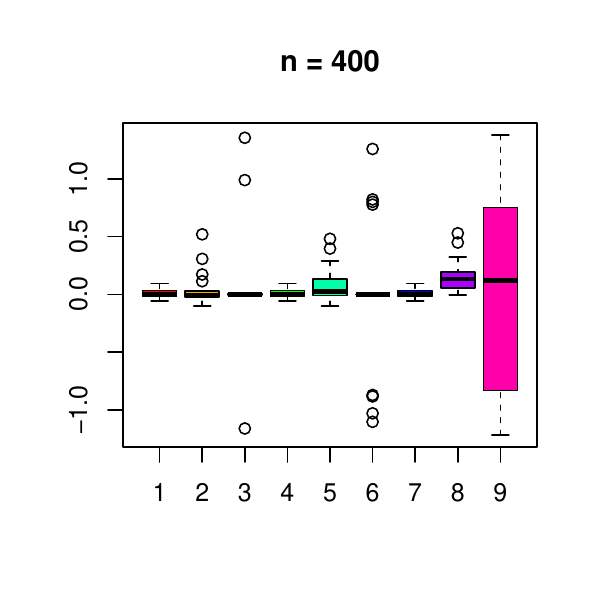}
\end{minipage}
}\\
\subfloat[\scriptsize{\text{$\alpha_0=0$}}]{
\begin{minipage}[t]{0.48\linewidth}
\centering
\includegraphics[width=2in,height=2.in]{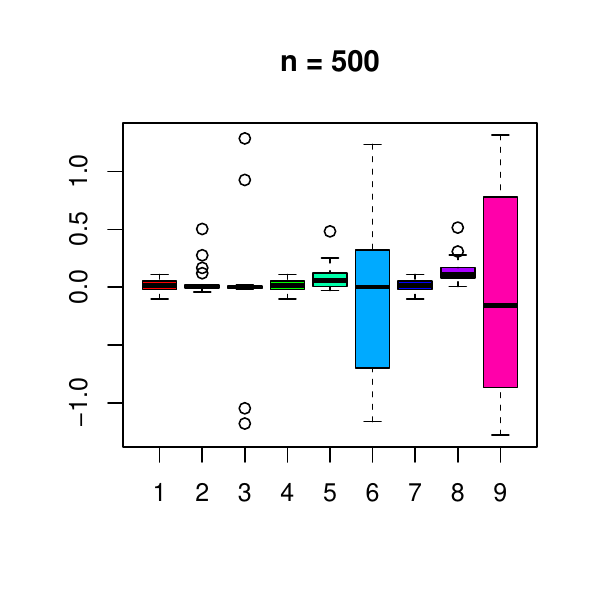}
\end{minipage}
}\hfill
\subfloat[\tiny{\text{$\alpha_0=0$}}]{
\begin{minipage}[t]{0.48\linewidth}
\centering
\includegraphics[width=2in,height=2.in]{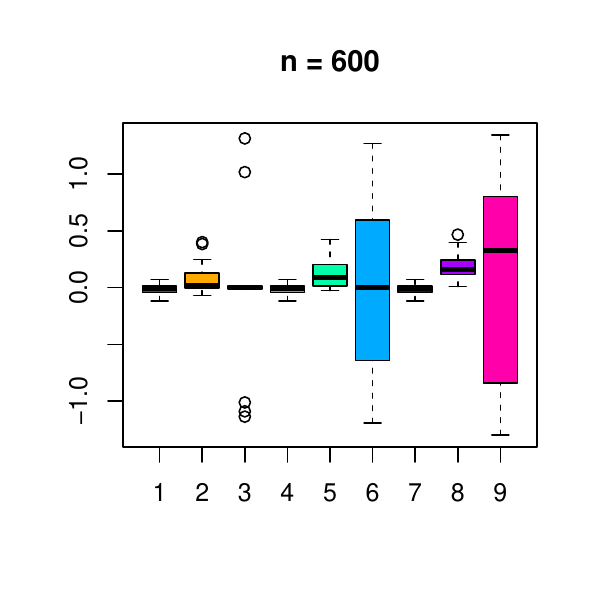}
\end{minipage}
}\\
\subfloat[\scriptsize{\text{$\alpha_0=0$}}]{
\begin{minipage}[t]{0.48\linewidth}
\centering
\includegraphics[width=2in,height=2.in]{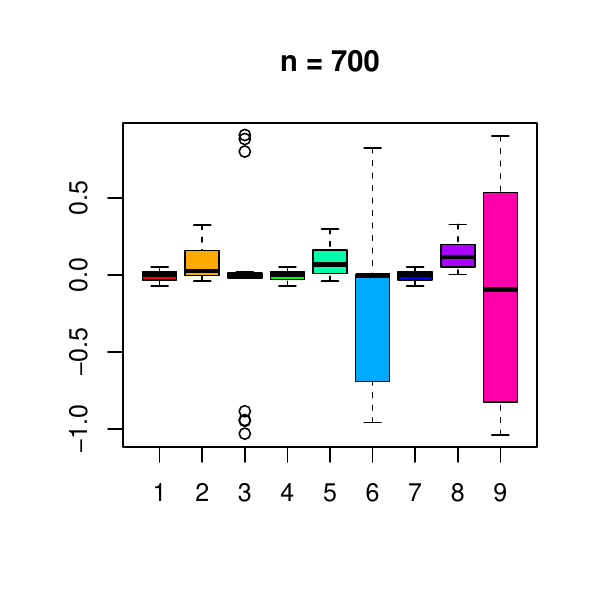}
\end{minipage}
}\hfill
\subfloat[\scriptsize{\text{$\alpha_0=0$}}]{
\begin{minipage}[t]{0.48\linewidth}
\centering
\includegraphics[width=2in,height=2.in]{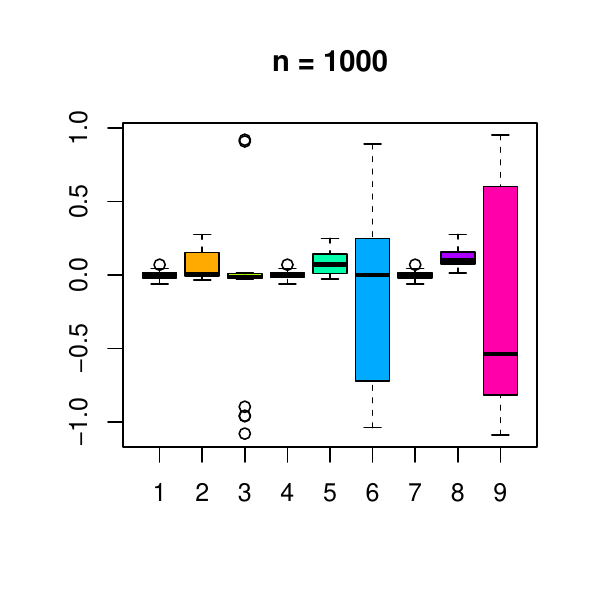}
\end{minipage}
}\\
\caption{Box plots of  $\hat{\mu}-\mu_0$, $\hat{\sigma}-\sigma_0$ and $\hat{\alpha}-\alpha_0$ for $20$ replicates when $\alpha_0=0$, $\mu_0\sim U(-2,2)$, $\sigma_0\sim U(0.5,1.5)$ and $n=300, 400, 500,$ 600,
700 and 1000 respectively. Each panel contains 9 box plots; from the left to right, the first three box plots for location, scale and skewness parameters of the cross-validated MPLE, the second three plots for the Q-based MPLE and the last three for the MLE.}
\label{accuracyalp0}
\end{figure}
\end{center}

In Figure \ref{accuracyalp0} and Figures 2, 3 and 4 in the Online Supplementary Material, the boxplots of estimation errors of $(\mu,\sigma,\alpha)$ suggest that both the proposed cross-validated MPLE and the Q-based MPLE performed substantially better than the MLE in all cases in terms of mean square error. For the sample size $\ge 500$, the cross-validated MPLE does have a strong superiocity over the MPLE, demonstrating that fixing the penalty coefficient to a constant in the Q-based MPLE can compromise the performance of the MPLE. Figures 3 and 4 in the Online Supplementary Material demonstrate that the cross-validated MPLE and MPLE virtually coincide when the underlying value 
$\alpha_0$ is not zero. When $\alpha_0=1$, both the cross-validated MPLE and MPLE tend to shrink to zero, suggesting that weak skewness will be filtered out after the penalisation.

\subsection{IC50 Data}
In this subsection, we considered an IC50 dataset derived from the experiment in
\cite{r14}. We first fit the proposed skew-normal model to log-IC50 measurements of each anti-cancer drugs over 111 cancer cell lines and then characterise these drugs by their estimated location, scale and skewness parameters $(\hat{\mu},\hat{\sigma},\hat{\alpha})$. The results are displayed in Figure \ref{districlusters}.
 There are 170 out
of 227 anti-cancer drugs with evident skewnesses $|\hat{\alpha}|>1$. We applied K-means to these estimates, obtaining 4 clusters with centres
$(-1.47, 2.50	,3.95)$, 
$(2.56,1.50, -0.43),$
$(2.88,1.91,-5.03)$,
$(2.07,2.54,-27.8)$ and of sizes $45,107,65,10$ respectively. 
The distributions in Cluster 4 are very negatively skewed, consisting of cancer growth blockers including inhibitors of LCK, BRAF, C-RAF-1, receptor tyrosine kinase and SRC kinase.  Cluster 3 is positive log-response group, where the distribution of $\hat{\alpha}$ is negatively skewed with $\hat{\mu}$ mainly taking positive values. The distributions in Cluster 2 are close to normal while the distributions in Cluster 1 (resistance group) are positively skewed with log-response mainly taking negative values.  Therefore, classification of drugs to four clusters indicate different patterns of drug response while drugs in the same cluster show a similar mechanism of action.

{\centerline {[Put Figure \ref{districlusters} here.]}}
\begin{center}
\captionsetup[subfloat]{position=top}
\begin{figure}[hptb]
\subfloat[\scriptsize{\text{Scatter plot}}]{
\begin{minipage}[t]{0.8\linewidth}
\centering
\includegraphics[width=2in,height=2.in]{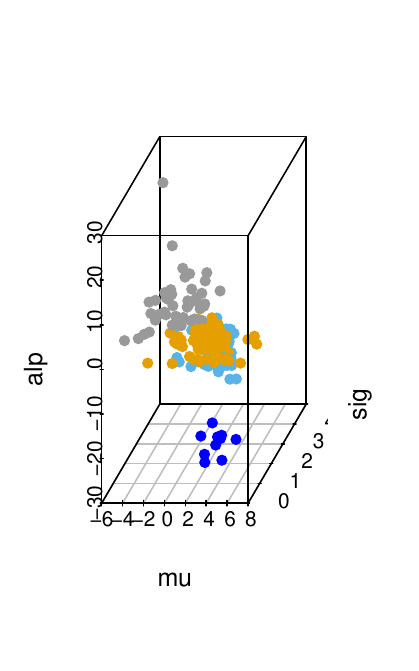}
\end{minipage}
}\\
\subfloat[\scriptsize{\text{Clusters 1 and 3}}]{
\begin{minipage}[t]{0.48\linewidth}
\centering
\includegraphics[width=2in,height=2.in]{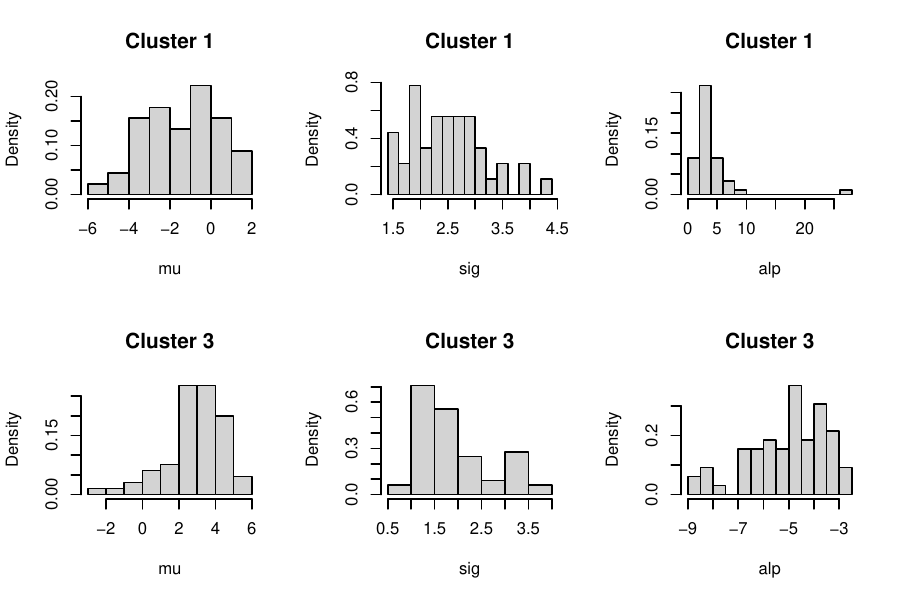}
\end{minipage}
}\hfill
\subfloat[\scriptsize{\text{Clusters 2 and 4}}]{
\begin{minipage}[t]{0.48\linewidth}
\centering
\includegraphics[width=2in,height=2.in]{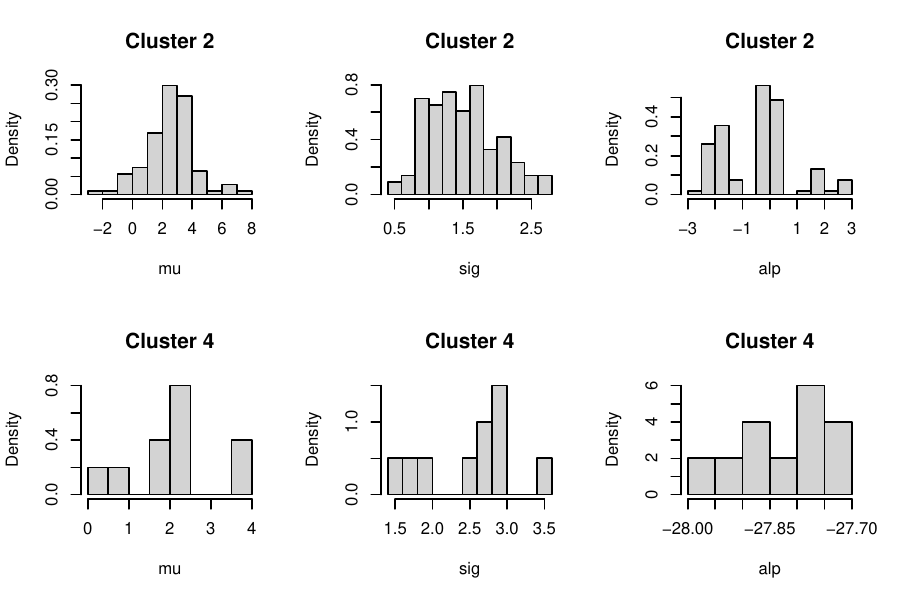}
\end{minipage}
}\\
\caption{\small
{\noindent 
(a) The scatter plot of estimates $(\hat{\mu},\hat{\sigma},\hat{\alpha})$ for the log-IC50 measurements of 227 anti-cancer drugs. Clusters 1, 2, 3, 4 are with gray, orange, green and blue colors respectively.
(b) Distribution patterns of estimated $\hat{\mu}$, $\hat{\sigma}$ and $\hat{\alpha}$ in the clusters 1 and 3. (c) Distribution patterns of estimated $\hat{\mu}$, $\hat{\sigma}$ and $\hat{\alpha}$ in the clusters 2 and 4.
}
}
\label{districlusters}
\end{figure}
\end{center}

Here, the numerical result demonstrated the role of the skewness in discriminating functions of drugs. 
 Given empirical distributions of
the skewness of the log-IC50 in each drug group described as in Figure \ref{districlusters},
the function role of a new drug can be tested again the existing group by
 comparing its estimated
skewness paremeter to the expected value of the
skewness of the existing group.

\section {Discussion and Conclusion}
We have proposed a novel approach for determining penalty coefficients in the maximum penalised likelihood estimation for skew normal distribution families 
 by using the multifold cross-validation. The proposed procedure has addressed the problem of under-regularisation in the Q-based MPLE caused by fixing the penalty coefficient  to a constant.
We have conducted an asymptotic analysis on the behavior of the proposed procedure. In particular, under some regularity condition, we have shown that the cross-validated MPLE can make a sharp improvement over the Q-based approach. This has resulted in the asymptotic efficiency of the proposed estimators. 

We have assessed the performance of the proposed procedure by use of simulated and real data. The simulations have demonstrated that our new procedure can substantially outperform the Q-based MPLE in terms of bias and standard error in a range of scenarios. We have applied the proposed procedure to the analysis of an anti-cancer drug sensitivity dataset, identifying two clusters which have contrasting behavior of drug resistance, one with strong negatively skew drug sensitivity and the other with strong positively skew drug sensitivity. The result is consistent with the existing finding about the role of drug Erlotinib in reducing cancer cell lines resistance to drug
Paclitaxel.

 \begin{acks}[Acknowledgments]
The authors would like to thank Professor Gary Green for his constructive comments on skew distributions that improved the quality of this paper.
\end{acks}

\begin{funding}
The first author was supported by the EPSRC grant EP/X038297/1.
The second author was supported by the GTA PhD scholarship, University of Kent..
\end{funding}

\begin{appendix}
\section{Proofs}\label{appA}

\begin{proof}[Proof of Proposition \ref{prop1}]
 Without loss of generality, assume that $\mbox{pen}(\theta)=\mbox{pen}_1(\theta).$
 Note that when  $\theta_0\not=0$, $C_0$ is invertible which implies
invertibility of $(C-D)^{-1}_{0\lambda/n}$. 
 We have
\begin{eqnarray*}
\sqrt{n}(\hat{\mu}-\mu_0,\hat{\eta}-\eta_0, \hat{\theta}-\theta_0)^T
&=&\frac 1{\sqrt{n}}
(-(C-D)^{-1}_{0\lambda/n})\frac{\partial l_{inc}(y_i)}{\partial (\mu_0,\eta_0,\theta_0)^T}(1+o_p(1))\\
&&+\frac{\lambda}{2\sqrt{n}}(e^{2\theta_0}-e^{-2\theta_0})(C-D)^{-1}_{0\lambda/n}
\e_3(1+o_p(1))
\end{eqnarray*}
which is asymptotically normal with asymptotic mean
\[
M_{0\lambda/n}=(C-D)^{-1}_{0\lambda/n}\frac{\lambda}{2{\sqrt{n}}}(e^{2\theta_0}-e^{-2\theta_0})\e_3
\]
and asymptotic covariance matrix
\begin{eqnarray*}
V_{0\lambda/n}=\left(C-D\right)^{-1}_{0\lambda/n}(-C_0)\left(C-D\right)^{-1}_{0\lambda/n}.
\end{eqnarray*}
Let $d_{0\lambda/n}=\sqrt{\lambda/n}\sqrt{e^{2\theta_0}+e^{-2\theta_0}}\e_3$ and 
\[
C^{-1}_0=\begin{pmatrix}
c^{11}_0&c^{12}_0&c^{13}_0\\
c^{21}_0&c^{22}_0&c^{23}_0\\
c^{31}_0&c^{32}_0&c^{33}_0\\
\end{pmatrix},\quad c^{.3}_0=\begin{pmatrix} c^{13}_0\\ c^{23}_0\\c^{33}_0\end{pmatrix},\quad
c^{3.}_0=(c^{31}_0\quad c^{32}_0 \quad c^{33}_0).
\]
Then $D_{0\lambda/n}=d_{0\lambda/n}d_{0\lambda/n}^T=(\lambda/n)(e^{2\theta_0}+e^{-2\theta_0})\e_3\e_3^T$ and
\begin{eqnarray*}
\left(C-D\right)^{-1}_{0\lambda/n}&=&\left(C_0-d_{0\lambda/n}d_{0\lambda/n}^T\right)^{-1}\\
&=&C^{-1}_0+\frac{C^{-1}_0d_{0\lambda/n}d_{0\lambda/n}^TC^{-1}_0}{1+d_{0\lambda/n}^TC^{-1}_0d_{0\lambda/n}}.
\end{eqnarray*}
\begin{eqnarray*}
V_{0\lambda/n}&=&-\left(C^{-1}_0+\frac{C^{-1}_0d_{0\lambda/n}d_{0\lambda/n}^TC^{-1}_0}{1+d_{0\lambda/n}^TC^{-1}_0d_{0\lambda/n}}\right)C_0
\left(C^{-1}_0+\frac{C^{-1}_0d_{0\lambda/n}d_{0\lambda/n}^TC^{-1}_0}{1+d_{0\lambda/n}^TC^{-1}_0d_{0\lambda/n}}\right)
\\
&=&-\left(C^{-1}_0+\frac{\frac{\lambda}n(e^{2\theta_0}+e^{-2\theta_0})}{1+\frac{\lambda}n(e^{2\theta_0}+e^{-2\theta_0})c^{33}_0}c^{. 3}_0c^{3.}_0\right)C_0\\
&&\times\left(C^{-1}_0+\frac{\frac{\lambda}n(e^{2\theta_0}+e^{-2\theta_0})}{1+\frac{\lambda}n(e^{2\theta_0}+e^{-2\theta_0})c^{33}_0}c^{. 3}_0c^{3.}_0\right)\\
M_{0\lambda/n}&=&\left(C^{-1}_0+\frac{\frac{\lambda}n(e^{2\theta_0}+e^{-2\theta_0})}{1+\frac{\lambda}n(e^{2\theta_0}+e^{-2\theta_0})c^{33}_0}c^{. 3}_0c^{3.}_0\right)\frac{\lambda}{2{\sqrt{n}}}(e^{2\theta_0}-e^{-2\theta_0})\e_3.
\end{eqnarray*}
So, when $\lambda/\sqrt{n}\to 0,$ estimate $(\hat{\mu},\hat{\eta},\hat{\theta})$ is asymptotically unbiased and efficient in the sense that its variance asymptotically achieves the Cramer-Rao lower bound. The proof is completed.
\end{proof}

\begin{proof}[Proof of Theorem \ref{th1}]
Without loss of generality, assume that $n$ is a multiple of $K$ and that $n_1=\cdots=n_K=n/K$. Let $n_{-j}=n-n_j=(K-1)n/K.$ Let $D_{0\lambda/n_{-j}}$ denote $(\lambda/n_{-j})(e^{2\theta_0}+e^{-2\theta_0})\e_3\e_3^T$ and 
$(C-D)_{0\lambda/n_{-j}}$ denote $C_0-D_{0\lambda/n_{-j}}$.
 It follows from the equation (\ref{case1eq}) that
\begin{eqnarray*}
&&(\hat{\mu}_{[-j]\lambda}-\mu_0,
\hat{\eta}_{[-j]\lambda}-\eta_0,
\hat{\theta}_{[-j]\lambda}-\theta_0)^T\\
&&\qquad=\frac 1{n_{-j}}\left(\sum_{i\in [n]}-\sum_{i\in [j]}\right)
(-(C-D)^{-1}_{0\lambda/n_{-j}})\frac{\partial l_{inc}(y_i)}{\partial (\mu_0,\eta_0,\theta_0)^T}
(1+o_p(1))\\
&&\qquad+(C-D)^{-1}_{0\lambda/n_{-j}}\frac{\lambda}{2n_{-j}}(e^{2\theta_0}-e^{-2\theta_0})\e_3
(1+o_p(1))
\end{eqnarray*}
Similarly,
\begin{eqnarray*}
(\hat{\mu}_{[j]}-\mu_0,
\hat{\eta}_{[j]}-\eta_0,
\hat{\theta}_{[j]}-\theta_0)^T
&=&\frac 1{n_j}\sum_{i\in [j]}(-C^{-1}_0)\frac{\partial l_{inc}(y_i)}{\partial (\mu_0,\eta_0,\theta_0)^T}(1+o_p(1)).
\end{eqnarray*}
Consiquently,
\begin{eqnarray*}
&&(\hat{\mu}_{[-j]\lambda}-\hat{\mu}_{[j]},
\hat{\eta}_{[-j]\lambda}-\hat{\eta}_{[j]},
\hat{\theta}_{[-j]\lambda}-\theta_{[j]})^T\\
&&\qquad=\frac{K}{K-1}(-(C-D)^{-1}_{0\lambda/n_{-j}})\frac 1n\sum_{i\in [n]}
\frac{\partial l_{inc}(y_i)}{\partial (\mu_0,\eta_0,\theta_0)^T}
(1+o_p(1))\\
&&\qquad\quad+(C-D)^{-1}_{0\lambda/n_{-j}}\frac {\lambda}{2n_{-j}}(e^{2\theta_0}-e^{-2\theta_0})\e_3(1+o_p(1))\\
&&\qquad\quad-\left(-(C-D)^{-1}_{0\lambda/n_{-j}}\frac 1{n_{-j}}-C^{-1}_0\frac 1{n_j}\right)\sum_{i\in [j]}
\frac{\partial l_{inc}(y_i)}{\partial (\mu_0,\eta_0,\theta_0)^T}
(1+o_p(1))\\
&&\qquad=
\frac{K}{K-1}(-(C-D)^{-1}_{0\lambda/n_{-j}})\frac 1n\sum_{i\in [n]}
\frac{\partial l_{inc}(y_i)}{\partial (\mu_0,\eta_0,\theta_0)^T}
(1+o_p(1))\\
&&\qquad\quad+\frac{\lambda}n\frac K{K-1} (C-D)^{-1}_{0\lambda/n_{-j}}\frac {1}{2}(e^{2\theta_0}-e^{-2\theta_0})(1+o_p(1))\\
&&\qquad\quad-\left(-(C-D)^{-1}_{0\lambda/n_{-j}}\frac 1{K-1}-C^{-1}_0\right)\frac 1{n_j}\sum_{i\in [j]}
\frac{\partial l_{inc}(y_i)}{\partial (\mu_0,\eta_0,\theta_0)^T}
(1+o_p(1))\\
\end{eqnarray*}
Let $U_i^T=\frac{\partial l_{inc}(y_i)}
{\partial (\mu_0,\eta_0,\theta_0)}(-C_0)^{-1/2}$ and 
$W_{0\lambda/n_{-j}}=(-C_0)^{1/2}
(-(C-D)^{-1}_{0\lambda/n_{-j}})(-C_0)^{1/2}.$
Then
\begin{eqnarray*}
&&n_j
(\hat{\mu}_{[-j]\lambda}-\hat{\mu}_{[j]},
\hat{\eta}_{[-j]\lambda}-\hat{\eta}_{[j]},
\hat{\theta}_{[-j]\lambda}-\theta_{[j]})
C_0
(\hat{\mu}_{[-j]\lambda}-\hat{\mu}_{[j]},
\hat{\eta}_{[-j]\lambda}-\hat{\eta}_{[j]},
\hat{\theta}_{[-j]\lambda}-\theta_{[j]})^T\\
&&\quad=-\frac K{(K-1)^2}\left(\frac 1{\sqrt{n}}\sum_{i\in [n]}U_i^T\right)W^2_{0\lambda/n_{-j}}\left(\frac 1{\sqrt{n}}\sum_{i\in [n]}U_i\right)\\
&&\qquad-\frac{2K}{(K-1)^2}\frac{\lambda}{\sqrt{n}}\left(\frac 1{\sqrt{n}}\sum_{i\in [n]}
U_i^T\right)W^2_{0\lambda/n_{-j}}(-C_0)^{-1/2}\frac {1}{2}(e^{2\theta_0}-e^{-2\theta_0})\e_3\\
&&\qquad+\frac{2\sqrt{K}}{K-1}\left(\frac 1{\sqrt{n}}\sum_{i\in [n]}U_i^T\right)
\left(W^2_{0\lambda/n_{-j}}\frac 1{K-1}+W_{0\lambda/n_{-j}}\right)
\left(\frac 1{\sqrt{n_j}}\sum_{i\in [j]} U_i\right)\\
&&\qquad-\frac{\lambda^2}{n}\frac{K}{(K-1)^2}\frac {1}{2}(e^{2\theta_0}-e^{-2\theta_0})\e_3^T(-C_0)^{-1/2}W^2_{0\lambda/n_{-j}}(-C_0)^{-1/2}
\frac {1}{2}(e^{2\theta_0}-e^{-2\theta_0})\e_3\\
&&\qquad+
\frac{2\sqrt{K}}{K-1}\frac{\lambda}{\sqrt{n}}\frac {1}{2}(e^{2\theta_0}-e^{-2\theta_0})\e_3^T(-C_0)^{-1/2}\left(W^2_{0\lambda/n}\frac 1{K-1}+W_{0\lambda/n_{-j}} \right)
\frac 1{\sqrt{n_j}}\sum_{i\in [j]}U_i\\
&&\qquad-\left(\frac 1{\sqrt{n_j}}\sum_{i\in [j]}U_i^T\right)\left(W_{0\lambda/n_{-j}}\frac 1{K-1}+I\right)^2\left(\frac 1{\sqrt{n_j}}\sum_{i\in [j]}U_i\right).
\end{eqnarray*}
Expanding the $j$th validated log-likelihood function 
$l_{inc}(\hat{\mu}_{[-j]\lambda},\hat{\eta}_{[-j]\lambda},\hat{\theta}_{[-j]\lambda}\mid {\bf y}_j) $
at the MLE $(\hat{\mu}_j,\hat{\eta}_j,\hat{\theta}_j)$ of $l_{inc}(\mu, \eta, \theta \mid{\bf y}_j)$, we have
\begin{eqnarray*}
l_{inc}(\hat{\mu}_{[-j]\lambda},\hat{\eta}_{[-j]\lambda},\hat{\theta}_{[-j]\lambda}\mid {\bf y}_j)&=&
l_{inc}(\hat{\mu}_{j},\hat{\eta}_{j},\hat{\theta}_{j}\mid {\bf y}_j)\\
&&+0.5\sqrt{n_j}(\hat{\mu}_{[-j]\lambda}-\hat{\mu}_{j},\hat{\eta}_{[-j]\lambda}-\hat{\eta}_{j},\hat{\theta}_{[-j]\lambda}-\hat{\theta}_{j})\\
&&\times \frac 1{n_j}\frac{\partial^2 l_{inc}(\mu, \eta, \theta\mid {\bf y}_j)}
{\partial (\mu, \eta, \theta) \partial (\mu, \eta, \theta)^T}\mid_{(\hat{\mu}^*_{[-j]\lambda},\hat{\eta}^*_{[-j]\lambda},\hat{\theta}^*_{[-j]\lambda})}\\
&&\times \sqrt{n_j} (\hat{\mu}_{[-j]\lambda}-\hat{\mu}_{j},\hat{\eta}_{[-j]\lambda}-\hat{\eta}_{j},\hat{\theta}_{[-j]\lambda}-\hat{\theta}_{j})^T\\
&=&
l_{inc}(\hat{\mu}_{j},\hat{\eta}_{j},\hat{\theta}_{j}\mid {\bf y}_j)\\
&&+0.5\sqrt{n_j}(\hat{\mu}_{[-j]\lambda}-\hat{\mu}_{j},\hat{\eta}_{[-j]\lambda}-\hat{\eta}_{j},\hat{\theta}_{[-j]\lambda}-\hat{\theta}_{j})(1+o_p(1))\\
&&\times C_0\sqrt{n_j}(\hat{\mu}_{[-j]\lambda}-\hat{\mu}_{j},\hat{\eta}_{[-j]\lambda}-\hat{\eta}_{j},\hat{\theta}_{[-j]\lambda}-\hat{\theta}_{j})^T,
\end{eqnarray*}
where 
\[
(\hat{\mu}^*_{[-j]\lambda},\hat{\eta}^*_{[-j]\lambda},\hat{\theta}^*_{[-j]\lambda})=(\hat{\mu}_{j},\hat{\eta}_{j},\hat{\theta}_{j})+t(\hat{\mu}_{[-j]\lambda}-\hat{\mu}_{j},\hat{\eta}_{[-j]\lambda}-\hat{\eta}_{j},\hat{\theta}_{[-j]\lambda}-\hat{\theta}_{j}), 0\le t\le 1.
\]
Consequently,
\begin{eqnarray*}
\cv_a(\lambda)&+&\frac 1K\sum_{j=1}^Kl_{inc}(\hat{\mu}_{j},\hat{\eta}_{j},\hat{\theta}_{j}\mid {\bf y}_j)\\
&=&-\frac 1K\sum_{j=1}^K0.5\sqrt{n_j}(\hat{\mu}_{[-j]\lambda}-\hat{\mu}_{j},\hat{\eta}_{[-j]\lambda}-\hat{\eta}_{j},\hat{\theta}_{[-j]\lambda}-\hat{\theta}_{j})\\
&&\times C_0(1+o_p(1))\sqrt{n_j}(\hat{\mu}_{[-j]\lambda}-\hat{\mu}_{j},\hat{\eta}_{[-j]\lambda}-\hat{\eta}_{j},\hat{\theta}_{[-j]\lambda}-\hat{\theta}_{j})^T\\
&=&-\frac 1{2K}\sum_{j=1}^K
\sqrt{n_j}(\hat{\mu}_{[-j]\lambda}-\hat{\mu}_{j},\hat{\eta}_{[-j]\lambda}-\hat{\eta}_{j},\hat{\theta}_{[-j]\lambda}-\hat{\theta}_{j})\\
&&\times C_0\sqrt{n_j}(\hat{\mu}_{[-j]\lambda}-\hat{\mu}_{j},\hat{\eta}_{[-j]\lambda}-\hat{\eta}_{j},\hat{\theta}_{[-j]\lambda}-\hat{\theta}_{j})^T(1+o_p(1))\\
&=&\left(\frac{\lambda^2}{n}a_{\lambda/n}+2\frac{\lambda}{\sqrt{n}}b_{\lambda/n}+c_{\lambda/n}\right)(1+o_p(1))\\
&=&\left(a_{\lambda/n}\left(\frac{\lambda}{\sqrt{n}}+b_{\lambda/n}/a_{\lambda/n} \right)^2-b^2_{\lambda/n}/a_{\lambda/n}+c_{\lambda/n}\right)(1+o_p(1)) \\
&\ge&\left(-b^2_{\lambda/n}/a_{\lambda/n}+c_{\lambda/n}\right)(1+o_p(1)),
\end{eqnarray*}
where
\begin{eqnarray*}
a_{\lambda/n}&=&\frac{K}{2(K-1)^2}\frac {1}{4}(e^{2\theta_0}-e^{-2\theta_0})^2
\e_3^T(-C_0)^{-1/2}W^2_{0\lambda/n_{-j}}(-C_0)^{-1/2}\e_3,\\
b_{\lambda/n}&=&\frac 1{2(K-1)}\frac {1}{2}(e^{2\theta_0}-e^{-2\theta_0})\left(\frac 1{\sqrt{n}}\sum_{i\in [n]}
U_i^T\right)W_{0\lambda/n_{-j}}(I-W_{0\lambda/n_{-j}})(-C_0)^{-1/2}\e_3\\
c_{\lambda/n}&=&\left(\frac 1{\sqrt{n}}\sum_{i\in [n]}U_i^T\right)\left(\frac { K-2}{2(K-1)^2}W^2_{0\lambda/n_{-j}}-\frac 1{K-1}W_{0\lambda/n_{-j}} \right)
\left(\frac 1{\sqrt{n}}\sum_{i\in [n]}U_i\right)\\
&&+\frac 1{2K}\sum_{j=1}^K\left(\frac 1{\sqrt{n_j}}\sum_{i\in [j]}U_i^T\right)\left(W_{0\lambda/n_{-j}}\frac 1{K-1}+I\right)^2\left(\frac 1{\sqrt{n_j}}\sum_{i\in [j]}U_i\right).
\end{eqnarray*}
Over $\lambda/\sqrt{n}\in [0,\infty)$, when $\lambda/\sqrt{n}=\max\{0,-b_{\lambda/n}/a_{\lambda/n}\},$  $\cv_a(\lambda)$ asymptotically attains the minimum 
$$-b^2_{\lambda/n}/a_{\lambda/n}+c_{\lambda/n} 
=-\frac 1{2K}\frac{\left(\left(\frac 1{\sqrt{n}}\sum_{i\in [n]}
U_i^T\right)W_{0\lambda/n_{-j}}(I-W_{0\lambda/n_{-j}})(-C_0)^{-1/2}\e_3\right)^2}{\e_3^T(-C_0)^{-1/2}W^2_{0\lambda/n_{-j}}(-C_0)^{-1/2}\e_3}+c_{\lambda/n}
$$
which is independent of $\theta_0$.

 Note that
\begin{eqnarray*}
W_{0\lambda/n_{-j}}&=&I+\frac{\lambda}{n}\frac{K}{K-1}\frac{(e^{2\theta_0}+e^{-2\theta_0})(-C_0)^{-1/2}\e_3\e_3^T(-C_0)^{-1/2}}
{1-\frac{\lambda}{n}\frac{K}{K-1}(e^{2\theta_0}+e^{-2\theta_0})\e_3^T(-C_0)^{-1}\e_3}.\\
W_{0\lambda/n_{-j}}(I-W_{0\lambda/n_{-j}})&=&\frac{\lambda}{n}O(1).
\end{eqnarray*}

 For fixed $\theta_0\not=0,$ $\lambda$ satisfying $\lambda/\sqrt{n}\to\infty$ and $0\le \lambda/n\le \omega_0$, it follows from the above equations that $\cv_a(\lambda)$ tends to infinity. While for bounded $\lambda/\sqrt{n}$, $\lambda/n$ tends to zero,
$W_{0\lambda/n_{-j}}\to I$ and 
\begin{eqnarray*}
a_{\lambda/n}&\to& \frac{K}{2(K-1)^2}\frac {1}{4}(e^{2\theta_0}-e^{-2\theta_0})^2
\e_3^T(-C_0)^{-1}\e_3,\\
b_{\lambda/n_{-j}}/a_{\lambda/n_{-j}}&=&-\frac{\lambda}{\sqrt{n}}O(1)\frac{2(e^{2\theta_0}+e^{-2\theta_0})}{\sqrt{n}(e^{2\theta_0}-e^{-2\theta_0})}\\
&&\times\frac{\frac 1{\sqrt{n}}\sum_{i\in [n]}U_i^T(-C_0)^{-1/2}\e_3\e_3^T(-C_0)^{-1/2}}{1-(\lambda K/(n(K-1)))(e^{2\theta_0}+e^{-2\theta_0})\e_3^T(-C_0)^{-1}\e_3}.\\
\left(\frac{\lambda}{\sqrt{n}}+b_{\lambda/n_{-j}}/a_{\lambda/n_{-j}}\right)^2&=&\frac{\lambda^2}{n}(1-o_p(1))^2.\\
c_{\lambda/n}&\to& \frac K{2(K-1)^2}(-\chi^2_{3}+\chi^2_{3K}),
\end{eqnarray*}
where $\chi^2_3$ and $\chi^2_{3K}$ are two dependented chi-squared random variables.
Therefore, for $\lambda/n\in [0,\omega_0]$ $\cv_a(\lambda)$ asymptotically attains the minimum $ -\frac K{2(K-1)^2}\chi^2_{3}+\frac K{2(K-1)^2}\chi^2_{3K}$ when $\lambda/\sqrt{n}= 0.$ This implies that $\lambda_{cv}/\sqrt{n}$ tends to zero in probability when the true value of $\theta_0\not=0.$ The proof is completed.
\end{proof}

\begin{proof}[Proof of Proposition \ref{prop2}]
 Note that when $\theta_0=0$,  $C_0=\mbox{diag}(\sigma^{-2}_0,-2,0)$
 is degenerate. Let 
\[
\bu_i=\left(z_{i0}, z_{i0}^2-1\right)^T.
\]
 Then
\begin{eqnarray*}
\frac 1n\frac{\partial l_{inc}(\mu_0,\eta_0,0)}{\partial (\mu,\eta,\theta)^T}&=&
\frac 1n\sum_{i=1}^n
(\bu_i^T,0)^T
\end{eqnarray*}
And let $z_i^*=(y_i-\mu^*_{\lambda})/\sigma^*_{\lambda},$ we have
\begin{eqnarray*}
c_{11}\mid_{(\mu^*_{\lambda},\eta^*_{\lambda},\theta^*_{\lambda})}&=&-\frac 1{\sigma^{*2}_{\lambda}}-\frac{\alpha^{*2}_{\lambda}}{\sigma^{*2}_{\lambda}}\frac 2{\pi}(1+o_p(1))\\
c_{12}\mid_ {(\mu^*_{\lambda},\eta^*_{\lambda},\theta^*_{\lambda})}
&=&-\frac 2{\sigma^*_{\lambda}}\frac 1n\sum_{i=1}^nz_i^*
+\frac{\alpha^*_{\lambda}}{\sigma^*_{\lambda}}\sqrt{2/{\pi}}O_p(\theta^*_{\lambda}+1/\sqrt{n}).\\
c_{13}\mid_{(\mu^*_{\lambda},\eta^*_{\lambda},\theta^*_{\lambda})}
&=&-\frac{\delta^{*2}_{\lambda}}{\sigma^*_{\lambda}}
\sqrt{2/{\pi}}-\frac {\theta^{*2}_{\lambda}}{2\sigma^*_{\lambda}}\sqrt{2/{\pi}}\\
&&-\frac{\alpha^*_{\lambda}}{\sigma^*_{\lambda}}\frac 1n\sum_{i=1}^n\frac{\phi(A^*_i)}{\Phi(A^*_i)}\left(\alpha^*_{\lambda}\left(z_i^*+\delta^*_{\lambda}\sqrt{2/{\pi}}\right)+\frac{\phi(A^*_i)}{\Phi(A^*_i)} \right)\\
&&\quad \times 
\left(\alpha^*_{\lambda}(1-\delta^{*2}_{\lambda})\sqrt{2/{\pi}} \right)\\
&=&O_p(\theta^{*2}_{\lambda}).
\end{eqnarray*}
\begin{eqnarray*}
\frac{c_{13}|_{(\mu^*_{\lambda},\eta^*_{\lambda},\theta^*_{\lambda})}}{\theta^{*2}_{\lambda}}&\to& 
-\frac 1{\sigma^*_{\lambda}}\sqrt{ 2/{\pi}}\left(\frac 32+2/{\pi}\right) \mbox{  as $\theta^*_{\lambda} \to 0$.}\\
c_{21}\mid_{(\mu^*_{\lambda},\eta^*_{\lambda},\theta^*_{\lambda})}&=&c_{12}\mid_ {(\mu^*_{\lambda},\eta^*_{\lambda},\theta^*_{\lambda})}\\
c_{22}\mid_{(\mu^*_{\lambda},\eta^*_{\lambda},\theta^*_{\lambda})}
&=&\theta^*_{\lambda}\frac 4{\pi}\sum_{i=1}^nz_i^{*2} +
\theta^*_{\lambda}O_p(\theta^*_{\lambda}+1/\sqrt{n}).\\
\frac{c_{23}|_{(\mu^*_{\lambda},\eta^*_{\lambda},\theta^*_{\lambda})}}{\theta^*_{\lambda}}&\to&\frac 4{\pi} \mbox{  as $n\to \infty.$ }\\
c_{31}\mid_{(\mu^*_{\lambda},\eta^*_{\lambda},\theta^*_{\lambda})}&=&c_{13}\mid_{(\mu^*_{\lambda},\eta^*_{\lambda},\theta^*_{\lambda})},\quad
c_{32}\mid_{(\mu^*_{\lambda},\eta^*_{\lambda},\theta^*_{\lambda})}=c_{23}\mid_{(\mu^*_{\lambda},\eta^*_{\lambda},\theta^*_{\lambda})}\\
c_{33}\mid_{(\mu^*_{\lambda},\eta^*_{\lambda},\theta^*_{\lambda})}
&=&
\frac 2{\pi}\frac 1n\sum_{i=1}^n\left(1- z_i^{*2}\right)
+\theta^*_{\lambda}O_p(\theta^{*}_{\lambda}+1/\sqrt{n}).
\end{eqnarray*}

Let
\begin{eqnarray*}
I_{11}=\begin{pmatrix}
c_{11}&c_{12}\\
c_{21}&c_{22}
\end{pmatrix},
I_{12}=\begin{pmatrix}
c_{13}\\
c_{23}
\end{pmatrix},
I_{21}=I_{12}^T,
I_{22}=c_{33}-\frac{\lambda}{n}(e^{2\theta}+e^{-2\theta}).
\end{eqnarray*}

Let $I_{110}=\mbox{ diag}(-1/\sigma_0^2,-2),$ $I_{220}=\frac 2{\pi}\frac 1n\sum_{i=1}^n(1-z_{i0}^2)-2\lambda/n,$  $I^*_{11\lambda}=$ $I_{11}\mid_{(\mu^*_{\lambda},\eta^*_{\lambda},\theta^*_{\lambda})} $, 
\newline $I^*_{12\lambda}=$ $ I_{12}\mid_{(\mu^*_{\lambda},\eta^*_{\lambda},\theta^*_{\lambda})} $, $I^*_{21\lambda}=I_{21}\mid_{(\mu^*_{\lambda},\eta^*_{\lambda},\theta^*_{\lambda})} $ and $I^*_{22\lambda}=I_{22}\mid_{(\mu^*_{\lambda},\eta^*_{\lambda},\theta^*_{\lambda})} $. Then
\begin{eqnarray*}
I^{-1}_{11}&=&\frac 1{c_{11}c_{22}-c_{21}c_{12}}
\begin{pmatrix}
c_{22}&-c_{12}\\
-c_{21}&c_{11}
\end{pmatrix},\\
I_{21}I^{-1}_{11}I_{12}&=&\frac{c^2_{31}c_{22}-2c_{32}c_{21}c_{13}+c^2_{32}c_{11}}{c_{11}c_{22}-c_{21}c_{12}}.\\
I^*_{21\lambda}I^{-1}_{110}I^*_{12\lambda}&=&\frac{\theta^{*2}_{\lambda}O_p(1)}{\frac 1{\sigma^{*2}_{\lambda}}\frac 2n
\sum_{i=1}^nz_i^{*2}+O_p(1/n)+\theta^*_{\lambda}O_p(\theta^*+1/\sqrt{n})}.\\
I^*_{22\lambda}-I^*_{21\lambda}I^{-1}_{110}I^*_{12\lambda}&=&\frac 2{\pi}\frac 1n\sum_{i=1}^n
\left(1-z_i^{*2}\right)-\frac{2\lambda}{n}
+\theta^*O_p(\theta^*_{\lambda}+1/\sqrt{n}).\\
I^*_{12\lambda}&=&\frac 4{\pi}\theta^*_{\lambda}\left( O_p(\theta^*_{\lambda}),\frac 1n\sum_{i=1}^nz_i^{*2}+O_p(\theta^*_{\lambda}+1/\sqrt{n}) \right)^T.
\end{eqnarray*}
 We have
\begin{eqnarray*}
-\frac 1{\sqrt{n}}\sum_{i=1}^n\bu_i
&=&I^*_{11\lambda}
\begin{pmatrix}
\sqrt{n}(\hat{\mu}_{\lambda}-\mu_0)\\
\sqrt{n}(\hat{\eta}_{\lambda}-\eta_0)
\end{pmatrix}+I^*_{12\lambda}\sqrt{n}\hat{\theta}_{\lambda}\\
0&=&I^*_{21\lambda}\begin{pmatrix}
\sqrt{n}(\hat{\mu}_{\lambda}-\mu_0)\\
\sqrt{n}(\hat{\eta}_{\lambda}-\eta_0)
\end{pmatrix}+I^*_{22\lambda}\hat{\theta}_{\lambda}
\end{eqnarray*}
which implies
\begin{eqnarray*}
-\frac 1{\sqrt{n}}\sum_{i=1}^n\bu_i
-\sqrt{n}\hat{\theta}_{\lambda}I^*_{12\lambda}
&=&I_{110}
\begin{pmatrix}
\sqrt{n}(\hat{\mu}_{\lambda}-\mu_0)\\
\sqrt{n}(\hat{\eta}_{\lambda}-\eta_0)
\end{pmatrix}(1+o_p(1))\\
(-I_{110})^{-1}\frac{1}{\sqrt{n}}\sum_{i=1}^n\bu_i
+\sqrt{n}\hat{\theta}_{\lambda}(-I_{110})^{-1}I^*_{12\lambda}
&=&
\begin{pmatrix}
\sqrt{n}(\hat{\mu}-\mu_0)\\
\sqrt{n}(\hat{\eta}-\eta_0)
\end{pmatrix}(1+o_p(1)),
\end{eqnarray*}
\begin{eqnarray*}
0&=&I^*_{21\lambda}(-I_{110})^{-1}\frac 1n\sum_{i=1}^n\bu_i
+\left(I^*_{21\lambda}(-I_{110})^{-1}I^*_{12\lambda}+I^*_{22\lambda}\right)\hat{\theta}_{\lambda}
\end{eqnarray*}
to which $\hat{\theta}_{\lambda}=0$ is a solution, since $I_{21}\mid_{(\mu^*_{\lambda},\eta^*_{\lambda},0)}=0$ and 
$I_{12}\mid_{(\mu^*_{\lambda},\eta^*_{\lambda},0)}=0$. Note that
\begin{eqnarray*}
\frac 1n l_{incp}(\hat{\mu}_{\lambda},\hat{\eta}_{\lambda},\hat{\theta}_{\lambda}|\by)&=&\frac 1n l_{incp}(\mu_0,\eta_0,0|\by)+\frac 1n\sum_{i=1}^n\bu_i^{\tau}
(\hat{\mu}_{\lambda}-\mu_0,\hat{\eta}_{\lambda}-\eta_0)^{\tau}\\
&&+\frac 12(\hat{\mu}_{\lambda}-\mu_0,\hat{\eta}_{\lambda}-\eta_0,\hat{\theta}_{\lambda})\mbox{diag}(I_{110},I_{220})(\hat{\mu}_{\lambda}-\mu_0,\hat{\eta}_{\lambda}-\eta_0,\hat{\theta}_{\lambda})^{\tau}\\
&&\times (1+o_p(1))\\
&=&\frac 1n l_{incp}(\mu_0,\eta_0,0|\by)+\frac 1n\sum_{i=1}^n\bu_i^{\tau}
(\hat{\mu}_{\lambda}-\mu_0,\hat{\eta}_{\lambda}-\eta_0)^{\tau}\\
&&+\frac 12(\hat{\mu}_{\lambda}-\mu_0,\hat{\eta}_{\lambda}-\eta_0)I_{110}(\hat{\mu}_{\lambda}-\mu_0,\hat{\eta}_{\lambda}-\eta_0)^{\tau}(1+o_p(1))\\
&&+\frac 12 I_{220}\hat{\theta}^2_{\lambda}(1+o_p(1))\\
&=&\frac 1n l_{incp}(\mu_0,\eta_0,0|\by)+\frac 1n\sum_{i=1}^n\bu_i^{\tau}(-I_{110})^{\tau}\frac 1n\sum_{i=1}^n\bu_i(1+o_p(1))\\
&&+\frac 1n\sum_{i=1}^n\bu_i^{\tau}(-I_{110})^{-1}I^*_{12\lambda}\hat{\theta}_{\lambda}(1+o_p(1))\\
&&-\frac 12\left(\frac 1n\sum_{i=1}^n\bu_i^{\tau}(-I_{110})^{-1}+I^*_{21\lambda}(-I_{110})^{-1}\hat{\theta}_{\lambda}\right)(-I_{110})\\
&&\qquad\times\left ((-I_{110})^{-1}\frac 1n\sum_{i=1}^n\bu_i
 +(-I_{110})^{-1}I^*_{12\lambda}\hat{\theta}_{\lambda} \right)\\
&&+\frac 12 I_{220}\hat{\theta}^2_{\lambda}(1+o_p(1))\\
&=&\frac 1n l_{incp}(\mu_0,\eta_0,0|\by)+\frac 1n\sum_{i=1}^n\bu_i^{\tau}(-I_{110})^{\tau}\frac 1n\sum_{i=1}^n\bu_i(1+o_p(1))\\
&&-\frac 12\left(-I_{220}+I_{21*}(-I_{110})^{-1}I^*_{12\lambda}\right)\hat{\theta}^2_{\lambda}(1+o_p(1))
\end{eqnarray*}
which attains the maximum  at $\hat{\theta}_{\lambda}=0$ when $I_{220}\le 0,$ that is, when 
\[
\frac{\lambda}{\sqrt{n}}\ge\max\left\{\frac 1{\pi}\frac 1{\sqrt{n}}\sum_{i=1}^n\left(1-z_{i0}^2\right),0\right\}.
\]
This implies:
\begin{itemize}
 \item When $\lambda/\sqrt{n}\to\infty$, we have $\hat{\theta}_{\lambda}=0.$
\item When $\frac 1{\pi}\frac 1{\sqrt{n}}\sum_{i=1}^n
\left(1-z_{i0}^2 \right)\le 0 $, for any $\lambda\ge 0,$ we have
$\hat{\theta}_{\lambda}=0.$
\end{itemize}
When $\hat{\theta}_{\lambda}=0$, we have
\begin{eqnarray*}
\begin{pmatrix}
\sqrt{n}(\hat{\mu}-\mu_0)/\sigma_0\\
\sqrt{n}(\hat{\eta}-\eta_0)
\end{pmatrix}.
=\begin{pmatrix}
\frac 1{\sqrt{n}}\sum_{i=1}^nz_{i0}\\
\frac 1{2\sqrt{n}}\sum_{i=1}^n\left(z_{i0}^2-1\right)
\end{pmatrix}(1+o_p(1))
\end{eqnarray*}
which is asymptotically normal with mean zero and covariance matrix $\mbox{diag}(1,1/2).$ 
When 
\[
0\le \frac{\lambda}{\sqrt{n}}<\frac 1{\pi}\frac 1{\sqrt{n}}\sum_{i=1}^n
\left(1-z_{i0}^2 \right)-\frac {\sqrt{n}}{2}I^*_{21\lambda}(-I_{110})^{-1}I^*_{12\lambda},
\]
we have
\begin{eqnarray*}
\frac 1n l_{incp}(\hat{\mu}_{\lambda},\hat{\eta}_{\lambda},\hat{\theta}_{\lambda}|\by)>
\frac 1n l_{incp}(\mu_0,\eta_0,0|\by)
+\frac 1n\sum_{i=1}^n\bu_i^{\tau}(-I_{110})^{\tau}\frac 1n\sum_{i=1}^n\bu_i(1+o_p(1)).
\end{eqnarray*}
Therefore, when $\frac 1{\pi}\frac 1{\sqrt{n}}\sum_{i=1}^n
\left(1-z_{i0}^2 \right)>0 $, there is $\lambda$ such that $\frac 1n l_{incp}(\hat{\mu}_{\lambda},\hat{\eta}_{\lambda},\hat{\theta}_{\lambda}|\by)$ attains the maximum at non-zero $\hat{\theta}_{\lambda}$ satisfying
\[
\frac {\sqrt{n}}{2}I^*_{21\lambda}(-I_{110})^{-1}I^*_{12\lambda}<\frac 1{\pi}\frac 1{\sqrt{n}}\sum_{i=1}^n
\left(1-z_{i0}^2 \right).
\]
The proof is completed.
\end{proof}

\begin{proof}[Proof of Theorem \ref{th2}]

It follows from
\[
\frac{\partial l_{inc}(\hat{\mu}_{[-j]\lambda},\hat{\eta}_{[-j]\lambda},\hat{\theta}_{[-j]\lambda}\mid \bf{y}_{[-j]})}{\partial(\mu,\eta,\theta)^T}=0
\]
and Taylor's theorem that
\begin{eqnarray*}
-\frac 1{n_{[-j]}}\sum_{i\in [-j]}
(\bu_i^T,0)^T&=&
\begin{pmatrix}
I_{11}&I_{12}\\
I_{21}&I_{22}
\end{pmatrix}_{(\mu^*_{[-j]\lambda},\eta^*_{[-j]\lambda},\theta^*_{[-j]\lambda})}
\begin{pmatrix}
\hat{\mu}_{[-j]\lambda}-\mu_0\\
\hat{\eta}_{[-j]\lambda}-\eta_0\\
\hat{\theta}_{[-j]\lambda}\\
\end{pmatrix},
\end{eqnarray*}
where $(\mu^*_{[-j]\lambda},\eta^*_{[-j]\lambda},\theta^*_{[-j]\lambda})
=t(\hat{\mu}_{[-j]\lambda},\hat{\eta}_{[-j]\lambda},\hat{\theta}_{[-j]\lambda})$ for some $0\le t\le 1.$ Furthermore, if $\hat{\theta}_{[-j]\lambda}$ $\not=0,$ then 
\[
\frac{I^{*[-j]}_{21\lambda}}{\hat{\theta}_{[-j]\lambda}}(-I_{110})^{-1}\frac 1{n_{[-j]}}\sum_{i\in [-j]}\bu_i
+I^{*[-j]}_{21\lambda}(-I_{110})^{-1}I^{*[-j]}_{12\lambda}+I^{*[-j]}_{22\lambda}=0.
\]

Consider $\lambda$ at which $I^{*[-j]}_{22\lambda}-I^{*[-j]}_{21\lambda}I^{-1}_{110}I_{12\lambda}\not=0.$ 
We have
\begin{eqnarray}\label{cvest}
(\hat{\mu}_{[-j]\lambda}-\mu_0,
\hat{\eta}_{[-j]\lambda}-\eta_0)^T
&=&(-I_{110})^{-1}\left(\frac 1{n_{[-j]}}\sum_{i\in [-j]} \bu_i
+I^{*[-j]}_{12\lambda}\hat{\theta}_{\lambda}\right)(1+o_p(1))\nonumber\\
&=&((-I_{110})^{-1}-(-I_{110})^{-1}I^{*[-j]}_{12\lambda}(I^{*[-j]}_{22\lambda}+I^{*[-j]}_{21\lambda}(-I_{110})^{-1}I_{12*})^{-1}\nonumber\\
&&\times I_{21*}(-I_{110})^{-1})
\frac 1{n_{[-j]}}\sum_{i\in [-j]} \bu_i (1+o_p(1)).
\end{eqnarray}
Similarly, it follows from
\[
\frac{\partial l_{inc}(\hat{\mu}_{j},\hat{\eta}_{j},0\mid \bf{y}_j)}{\partial(\mu,\eta,\theta)^T}=0
\]
and Taylor's theorem that
\[
(\hat{\mu}_{j}-\mu_0,
\hat{\eta}_{j}-\eta_0)^T=(-I_{110})^{-1}\frac 1{n_{j}}\sum_{i\in [j]}u_i(1+o_p(1)).
\]
Consequently, we have
\begin{eqnarray*}
&&(\hat{\mu}_{[-j]\lambda}-\hat{\mu}_j,
\hat{\eta}_{[-j]\lambda}-\hat{\eta}_j)^T\\
&&\qquad=(-I_{110})^{-1}\left(\frac 1{n_{[-j]}}\sum_{i\in [-j]}\bu_i+I^{*[-j]}_{12\lambda}\hat{\theta}_{\lambda}-\frac 1{n_{j}}\sum_{i\in [j]}\bu_i\right)(1+o_p(1))\\
&&\qquad=\frac 1{n_{[-j]}}\sum_{i\in [-j]}
\{(-I_{110})^{-1}-(-I_{110})^{-1}I^{*[-j]}_{12\lambda}\\
&&\qquad\quad\times \left(I^{*[-j]}_{22\lambda}+I^{*[-j]}_{21\lambda}(-I_{110})^{-1}I^{*[-j]}_{12\lambda} \right)^{-1}I^{*[-j]}_{21\lambda}(-I_{110})^{-1}
\} \bu_i (1+o_p(1))\\
&&\qquad\quad-\frac 1{n_j}\sum_{i\in [j]}(-I_{110})^{-1}\bu_i(1+o_p(1)).
\end{eqnarray*}
It follows from Propositin \ref{prop2} that for $
\lambda\ge\max\left\{\frac 1{\pi}\sum_{i\in [-j]}\left(1-z_{i0}^2\right),0\right\},$ we have
 $\hat{\theta}_{[-j]\lambda}=0$, $1\le j\le K$.

On other hand, using the Taylor expansion,  we have
\begin{eqnarray}\label{diff2}
&&l_{inc}(\hat{\mu}_{[-j]\lambda},\hat{\eta}_{[-j]\lambda},0\mid \mathbf{y}_j)-
l_{inc}(\hat{\mu}_j,\hat{\eta}_j,0\mid {\mathbf y}_j)\nonumber\\
&&\qquad=\frac {n_j}2(\hat{\mu}_{[-j]\lambda}-\hat{\mu}_{j},\hat{\eta}_{[-j]\lambda}-\hat{\eta}_{j},\hat{\theta}_{[-j]\lambda}-\hat{\theta}_{j})\nonumber\\
&&\qquad\quad\times\frac 1{n_j}\frac{\partial^2l_{inc}(\hat{\mu}^*_{[-j]\lambda},\hat{\eta}^*_{[-j]\lambda},\hat{\theta}^*_{[-j]\lambda})}
{\partial (\mu,\eta,\theta)^T\partial (\mu,\eta,\theta)}
\begin{pmatrix}
\hat{\mu}_{[-j]\lambda}-\hat{\mu}_{j}\\
\hat{\eta}_{[-j]\lambda}-\hat{\eta}_{j}\\
\hat{\theta}_{[-j]\lambda}-\hat{\theta}_{j}
\end{pmatrix}\nonumber\\
&&\qquad=
\frac {n_j}2(\hat{\mu}_{[-j]\lambda}-\hat{\mu}_{j},\hat{\eta}_{[-j]\lambda}-\hat{\eta}_{j},\hat{\theta}_{[-j]\lambda})\nonumber\\
&&\qquad\quad\times
\begin{pmatrix}
-\frac 1{\sigma^2_0}&0&0\\
0&-2&0\\
0&0&0
\end{pmatrix}
\begin{pmatrix}
\hat{\mu}_{[-j]\lambda}-\hat{\mu}_{j}\\
\hat{\eta}_{[-j]\lambda}-\hat{\eta}_{j}\\
\hat{\theta}_{[-j]\lambda}
\end{pmatrix} (1+o_p(1))\nonumber\\
&&\qquad=
-\frac {n_j}2(\hat{\mu}_{[-j]\lambda}-\hat{\mu}_j,\hat{\eta}_{[-j]\lambda}-\hat{\eta}_j )I_{110}(\hat{\mu}_{[-j]\lambda}-\hat{\mu}_j,\hat{\eta}_{[-j]\lambda}-\hat{\eta}_j )^T\nonumber\\
&&\qquad\quad\times (1+o_p(1))\nonumber\\
&&\qquad=-\frac {n_j}2\left(\frac 1{n_{[-j]}}\sum_{i\in [-j]} \bu_i
-\frac 1{n_{j}}\sum_{i\in [j]}\bu_i\right)^T(-I_{110})^{-1}\nonumber\\
&&\qquad\quad\times
\left(\frac 1{n_{[-j]}}\sum_{i\in [-j]} \bu_i
-\frac 1{n_{j}}\sum_{i\in [j]}\bu_i\right) (1+o_p(1)).
\end{eqnarray}
We have
\begin{eqnarray*}
&&\cv(\lambda)+\frac 1K\sum_{j=1}^K
l_{inc}(\hat{\mu}_j,\hat{\eta}_j,0\mid {\mathbf y}_j)\\
&&\qquad=\frac 1{2K}\sum_{j=1}^Kn_j\left(\frac 1{n_{[-j]}}\sum_{i\in [-j]} \bu_i
-\frac 1{n_{j}}\sum_{i\in [j]}\bu_i\right)^T(-I_{110})^{-1}\\
&&\qquad\quad\times
\left(\frac 1{n_{[-j]}}\sum_{i\in [-j]} \bu_i
-\frac 1{n_{j}}\sum_{i\in [j]}\bu_i\right) (1+o_p(1)).
\end{eqnarray*}
The proof is completed.
\end{proof}
\end{appendix}


\begin{supplement}
\stitle{Activation functions}
\sdescription{Plots of hyperbolic transformation functions and their derivatives.}
\end{supplement}
\begin{supplement}
\stitle{Additional simulation results}
\sdescription{Box plots of estimation errors.}
\end{supplement}




\end{document}